\documentclass[a4paper,onecolumn,accepted=2022-08-18,12pt]{quantumarticle}

\pdfoutput=1

\usepackage{graphicx}
\usepackage{amsmath}
\usepackage{amssymb}
\usepackage{amsfonts}
\usepackage{amsthm}
\usepackage{mathtools}
\usepackage{hyperref}
\usepackage{url}
\usepackage{xcolor}
\usepackage[T1]{fontenc}
\usepackage[english]{babel}
\usepackage{braket}
\usepackage{enumitem}
\usepackage{tikz}
\usepackage{bbm}

\hypersetup{colorlinks=true,urlcolor=[rgb]{0,0,0.5},citecolor=[rgb]{0.5,0,0},linkcolor=[rgb]{0,0,0.4}}

\usepackage[bbgreekl]{mathbbol}
\DeclareSymbolFontAlphabet{\mathbb}{AMSb}
\DeclareSymbolFontAlphabet{\mathbbl}{bbold}

\newtheorem{theorem}{Theorem}
\newtheorem{corollary}{Corollary}
\newtheorem{lemma}{Lemma}
\newtheorem{observation}{Observation}

\newtheorem{definition}{Definition}

\def\autorefapp#1{\hyperref[#1]{Appendix~\ref{#1}}}

\def\ii{\mathrm{i}}

\newcommand{\jh}[1]{
%	\textcolor{blue}
	{#1}
}

\def\ep{\varepsilon}

\def\iden{\mathbbl{1}}

\def\dm{d}
\def\met{\mathfrak{g}}

\begin{document}

\title{Random quantum circuits are approximate unitary $t$-designs in depth $O\left(nt^{5+o(1)}\right)$}

\author{Jonas Haferkamp}
\affiliation{Dahlem Center for Complex Quantum Systems, Freie Universit{\"a}t Berlin, Germany}

\maketitle

\begin{abstract}
	The applications of random quantum circuits range from quantum computing and quantum many-body systems to the physics of black holes.
	Many of these applications are related to the generation of quantum pseudorandomness: Random quantum circuits are known to approximate unitary $t$-designs.
	Unitary $t$-designs are probability distributions that mimic Haar randomness up to $t$th moments.
	In a seminal paper, Brand\~{a}o, Harrow and Horodecki prove that random quantum circuits \jh{on qubits} in a brickwork architecture of depth $O(n t^{10.5})$ are approximate unitary \jh{$t$-designs}. 
	In this work, we revisit this argument, which lower bounds the spectral gap of moment operators for local random quantum circuits by $\Omega(n^{-1}t^{-9.5})$.
	We improve this lower bound to $\Omega(n^{-1}t^{-4-o(1)})$, where the $o(1)$ term goes to $0$ as $t\to\infty$.
	A direct consequence of this scaling is that random quantum circuits generate approximate unitary $t$-designs in depth $O(nt^{5+o(1)})$.
	Our techniques involve Gao's quantum union bound and the unreasonable effectiveness of the Clifford group.
	As an auxiliary result, \jh{we prove fast convergence} to the Haar measure for random Clifford unitaries interleaved with Haar random single qubit unitaries.
\end{abstract}

\noindent Random unitaries are a ubiquitous concept in quantum information theory and quantum many-body physics. 
This ranges from practical applications such as randomized benchmarking~\cite{EmeAliZyc05,MagGamEmer,KnillBenchmarking} to mixing in black holes~\cite{HaydenPreskill}.
However, the applicability of uniformly (Haar) random unitaries and states is limited by the fact that they require exponential resources as counting arguments show~\cite{knill1995approximation}.
It is therefore desirable to consider less complex probability distributions over the unitary group that are sufficiently random for practical purposes.
This leads naturally to the notion of unitary $t$-designs~\cite{dankert2005efficient,dankert_exact_2009,gross_evenly_2007,AmbEme07}.
These are probability distributions which mimic the Haar measure up to $t$th moments.

There is an ongoing effort to find efficient constructions of unitary $t$-designs \cite{cleve2015near,harrow2009random}.
In a seminal result, Brand\~{a}o, Harrow and Horodecki proved that local random quantum circuits are approximate unitary $t$-designs after the application of $O(n^2t^{10.5})$ random gates for all $t\leq O(2^{0.4n})$~\cite{brandao_local_2016}.
This result has several consequences and many results~\cite{harrow2018approximate,onorati2017mixing,nakata2017efficient,mezher2018efficient} on the statistical properties of random processes over the unitary group directly depend on the bound from Ref.~\cite{brandao_local_2016}.

Recently, unitary $t$-designs played a key role in lower bounding quantum circuit complexity~\cite{brandao_complexity_2019,brandao2016efficient}.
Here, Ref.~\cite{brandao_complexity_2019} shows that strong notions of quantum circuit complexity of a unitary $t$-design can be lower bounded by $\Omega(nt)$ with high probability.
Combined with the design depth from Ref.~\cite{brandao_local_2016}, this roughly implies that the complexity of most quantum circuit of depth $T$ can be lower bounded by $\Omega(T^{1/10.5})$.
This line of research is inspired by the Brown-Susskind conjecture~\cite{brown2018second} in the context of the AdS/CFT correspondence~\cite{susskind2018black}.
The conjecture states that the quantum circuit complexity grows at a linear rate for an exponentially long time.
Indeed, Ref.~\cite{brandao_complexity_2019} implies such a linear lower bound in the limit of large local dimension~\cite{hunter2019unitary} (it was later shown that a scaling of $\sim 6t^2$ for the local dimension is sufficient~\cite{haferkamp2021improved}).
More recently, a variant of the Brown-Susskind conjecture for random quantum circuits was proven for the special case of the exact circuit complexity~\cite{haferkamp2021linear}.
However, for error-robust (and therefore more operational) notions of quantum circuit complexity and for random quantum circuits over qubits, the best lower bound for exponentially long times is $\Omega(T^{1/10.5})$.

Here, we want to improve the scaling of the design depth in $t$.
Indeed, the scaling in $n$ is already optimal (up to logarithmic factors).
It was conjectured in Ref.~\cite{brandao_local_2016} that the true scaling of the design depth might be linear in $t$, which would directly imply an error-robust version of the Brown-Susskind conjecture~\cite{brandao_complexity_2019}. 

In this work, we dissect the bound obtained on the spectral gap of random quantum circuits in Ref.~\cite{brandao_local_2016} and improve the lower bound on this spectral gap from $\Omega(n^{-1}t^{-9.5})$ to $\Omega(n^{-1}t^{-4-o(1)})$.

The lion's share of the improvement is a new unconditional bound on the spectral gap.
We achieve this bound by considering an auxiliary random walk that interleaves global uniformly random Clifford unitaries with Haar random unitaries on a single qubit.
This walk was introduced in Ref.~\cite{haferkamp2020quantum,zhou2020single} to analyze the number of non-Clifford unitaries required to approximate unitary $t$-designs for small values of $t$ satisfying $O(t^2)\leq n$.
Indeed, this random walk has desirable properties for the generation of unitary designs.
It was proven that, while relative error approximations require $\theta(n)$ many non-Clifford gates~\cite{leone2021quantum,oliviero2021transitions}, an additive constant approximation of unitary designs requires only a system-size independent amount of non-Clifford resources~\cite{haferkamp2020quantum}.
We show that the path coupling technique on the unitary group~\cite{oliveira2009convergence} provides a \jh{fast convergence} to the Haar measure for this auxiliary random walk.
As the Clifford group is a finite group with upper bounded circuit complexity, the comparison technique from Ref.~\cite{diaconis1993comparison} provides fast mixing bounds, which allows us to approximate the uniform measure on the Clifford group with a local random walk.
We then translate these bounds to a spectral gap for local random quantum circuits by invoking Gao's quantum union bound~\cite{gao2015quantum,anshu2016simple}.

\section{Preliminaries}
A central object of this paper is the moment superoperator defined with respect to a probability distribution $\nu$ on the unitary group $U(\dm)$, defined as
\begin{equation}
\Phi_\nu^{(t)}(A):=\int U^{\otimes t}A(U^{\dagger})^{\otimes t}\, \mathrm{d}\nu(U)\,.
\end{equation}
We denote the Haar-measure on the unitary group by $\mu_H$.

We make use of the following isomorphism called vectorization:  $\mathrm{vec}:\mathbb{C}^{\jh{d\times d}}\to \mathbb{C}^{\jh{d^2}}$ with $\mathrm{vec}(|i\rangle\langle j|)=|i\rangle\otimes |j\rangle$.
This isomorphism extends to superoperators, mapping them to matrices: $\mathrm{vec}(T)\mathrm{vec}(M):=\mathrm{vec}(T(M))$ for all $M\in \mathbb{C}^{\jh{d\times d}}$ for a superoperator $T$.

We introduce the following notation for the second highest eigenvalue of the moment operator:
\begin{equation}
g(\nu,t):=\big\|M(\nu,t)-M(\mu_H,t)\big\|_{\infty}\,,
\end{equation}
where the $t$-th moment operator of a probability distribution is defined as
\begin{equation}
M(\nu,t):=\mathrm{vec}\big(\Phi_\nu^{(t)}\big)=\int U^{\otimes t}\otimes \overline{U}{}^{\otimes t}\,\mathrm{d}\nu(U)
\end{equation} 
and $\|\bullet\|_{\infty}$ denotes the Schatten $\infty$-norm. 
The \textit{spectral gap} of the moment operator is $1-g(\nu,t)$.
Notice that we defined the spectral gap to be the difference between the first $t!$ eigenvalues and the $(t!+1)$st eigenvalue of the moment operator. 
That means that for non-universal probability distributions the gap can be $0$.
A probability distribution $\nu$ is called a \textit{$(\lambda,t)$-tensor product expander} if the spectral gap of the moment operator satisfies $1-g(\nu,t)\jh{\leq} \lambda$.

Moreover, we denote the Schatten $1$-norm by $||\bullet||_p$ and the Schatten $2$-norm/Frobenius norm by $||\bullet||_F$.
The stabilized induced (Schatten) $1$-norm -- called \textit{diamond norm} -- of superoperators will be denoted by $||\bullet||_{\diamond}$~\cite{aharonov1998quantum,watrous2018theory}.
Roughly, the distance between channels in diamond norm quantifies the ``one-shot'' distinguishability using entangled input states.

The second highest eigenvalue $g(\nu,t)$ can be amplified.
Specifically, the $k$-fold convolution of $\nu$ has the property that
\begin{equation}\label{eq:gconvolution}
g(\nu^{*k},t)\leq g(\nu,t)^k\,.
\end{equation}
Upper bounds on the second highest eigenvalue can be used to imply an approximate version of unitary designs~\cite{brandao_local_2016}.
We define approximate designs in two (inequivalent) ways, with a relative error and with an exponentially small additive error. 
The relation to the spectral gap turns out to be the same.
\begin{definition}[Approximate unitary designs] \label{def:approxdesign}
	\textcolor{white}{a}
	\begin{enumerate}
		\item A probability distribution $\nu$ on $U(\dm)$ is an $\varepsilon$-approximate unitary $t$-design if the moment superoperator obeys 
		\begin{equation} 
		\big\|\Phi_\nu^{(t)}-\Phi_{\mu_H}^{(t)}\big\|_\diamond \leq \frac{\varepsilon}{\dm^t}\,.
		\label{eq:approxdesign1}
		\end{equation}
		\item A probability distribution $\nu$ on $U(\dm)$ is a relative $\varepsilon$-approximate $t$-design if 
		\begin{equation}
		(1-\varepsilon)\Phi_\nu^{(t)} \preccurlyeq \Phi_{\mu_H}^{(t)} \preccurlyeq	(1+\varepsilon)\Phi_\nu^{(t)}\,,
		\label{eq:approxdesign2}
		\end{equation}
		where here $A\preccurlyeq B$ if and only if $B-A$ is a completely positive map.
	\end{enumerate}
\end{definition}
By applying \cite[Lemma~4]{brandao_local_2016}, as well as the fact that $\|\Phi^{(t)}_\nu - \Phi^{(t)}_{\mu_H}\|_\diamond \leq \dm^t g(\nu,t)$~\cite{LowThesis}, we can state:
\begin{lemma}\label{lemma:gtodesign}
	Let $\nu$ be a probability distribution on $U(\dm)$ such that $g(\nu,t)\leq \varepsilon/\dm^{2t}$. 
	Then $\nu$ is an $\ep$-approximate unitary $t$-design and obeys both Eq.~\eqref{eq:approxdesign1} and Eq.~\eqref{eq:approxdesign2}.
\end{lemma}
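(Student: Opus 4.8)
The plan is to chain together two known facts, both already cited in the excerpt. The statement of \autoref{lemma:gtodesign} has the hypothesis $g(\nu,t)\le \varepsilon/\dm^{2t}$ and asks us to conclude both parts of \autoref{def:approxdesign}. First I would establish Eq.~\eqref{eq:approxdesign1}. By the result of Ref.~\cite{LowThesis} we have the operator-norm-to-diamond-norm comparison $\|\Phi^{(t)}_\nu - \Phi^{(t)}_{\mu_H}\|_\diamond \le \dm^t\, g(\nu,t)$. Plugging in the hypothesis $g(\nu,t)\le \varepsilon/\dm^{2t}$ immediately gives
\begin{equation}
\big\|\Phi^{(t)}_\nu - \Phi^{(t)}_{\mu_H}\big\|_\diamond \le \dm^t\cdot \frac{\varepsilon}{\dm^{2t}} = \frac{\varepsilon}{\dm^t}\,,
\end{equation}
which is exactly Eq.~\eqref{eq:approxdesign1}. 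So the first design notion follows with essentially no work beyond invoking the cited inequality.

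Next I would turn to the relative-error notion, Eq.~\eqref{eq:approxdesign2}, which is where \cite[Lemma~4]{brandao_local_2016} enters. That lemma converts a bound on $g(\nu,t)$ (equivalently, on $\|M(\nu,t)-M(\mu_H,t)\|_\infty$) into the sandwich $(1-\varepsilon)\Phi^{(t)}_\nu \preccurlyeq \Phi^{(t)}_{\mu_H}\preccurlyeq(1+\varepsilon)\Phi^{(t)}_\nu$, provided $g(\nu,t)$ is at most something like $\varepsilon/\dm^{2t}$. The key structural input is that $\Phi^{(t)}_{\mu_H}$ is the orthogonal projector onto the commutant (the image of $M(\mu_H,t)$), so that $M(\nu,t)=M(\mu_H,t)+E$ with $\|E\|_\infty = g(\nu,t)$ and $E$ supported on the orthogonal complement in the relevant sense; one then checks complete positivity of $(1+\varepsilon)\Phi^{(t)}_\nu - \Phi^{(t)}_{\mu_H}$ and of $\Phi^{(t)}_{\mu_H}-(1-\varepsilon)\Phi^{(t)}_\nu$ by bounding the Choi matrix, and the factor $\dm^{2t}$ is precisely the dimension overhead incurred in passing from the $\infty$-norm bound on the perturbation to positivity of the (unnormalized) Choi operator. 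Since the hypothesis $g(\nu,t)\le\varepsilon/\dm^{2t}$ matches the threshold required by that lemma, Eq.~\eqref{eq:approxdesign2} follows directly.

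The proof is therefore essentially a two-line deduction: apply \cite[Lemma~4]{brandao_local_2016} for the relative bound and the inequality of Ref.~\cite{LowThesis} for the diamond-norm bound, both under the single hypothesis $g(\nu,t)\le \varepsilon/\dm^{2t}$. I do not expect a genuine obstacle here, since the excerpt explicitly says the lemma is obtained "by applying" these two external results. The only mild subtlety worth a sentence of care is bookkeeping of the dimensional factors — making sure the $\dm^{2t}$ in the hypothesis is the right power to feed both cited statements simultaneously, and that the normalization conventions for the diamond norm and for $\Phi^{(t)}$ versus $M(\cdot,t)$ are consistent with those in Refs.~\cite{brandao_local_2016,LowThesis}. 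Once that is checked, both displayed inequalities in \autoref{def:approxdesign} hold and the lemma is proved.
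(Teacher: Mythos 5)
Your proposal is correct and follows exactly the paper's route: the paper itself gives no further proof beyond invoking the inequality $\|\Phi^{(t)}_\nu-\Phi^{(t)}_{\mu_H}\|_\diamond\leq \dm^t g(\nu,t)$ from Ref.~\cite{LowThesis} for Eq.~\eqref{eq:approxdesign1} and \cite[Lemma~4]{brandao_local_2016} for Eq.~\eqref{eq:approxdesign2}, just as you do. Your dimensional bookkeeping ($\dm^t\cdot\varepsilon/\dm^{2t}=\varepsilon/\dm^t$) and your sketch of why the $\dm^{2t}$ threshold matches the cited Lemma~4 are both consistent with the intended argument.
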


In this paper we consider the following architectures of random quantum circuits comprised of 2-local unitary gates on a system of $n$ qubits.
\begin{definition}[Random quantum circuits] \label{def:RQCs}
	\textcolor{white}{a}
	\begin{enumerate}
		\item \emph{Local ($1D$) random quantum circuits:} Let $\nu_n$ denote the probability distribution on $U((\mathbb{C}^{2})^{\otimes n})$ defined by first choosing a random pair of adjacent \jh{qubits} and then applying a Haar random unitary $U_{i,i+1}$ from $U(4)$.
	     We assume periodic boundary conditions.
		\item \emph{Brickwork random quantum circuits:} Apply first a unitary $U_{1,2}\otimes U_{3,4}\otimes...$ and then a unitary $U_{2,3}\otimes U_{4,5}\otimes...$, where all $U_{i,i+1}$ are drawn Haar-randomly. For simplicity we assume in this case an even number of \jh{qubits}.  
		We denote this distribution by $\nu^{\rm bw}_n$.
	\end{enumerate}
\end{definition}

A standard tool for converting results on the gap on local random quantum circuits into brickwork circuits is the detectability lemma~\cite{aharonov2009detectability,anshu2016simple}. 
We will also make use of a strong converse called the \textit{quantum union bound}~\cite{gao2015quantum,anshu2016simple}.
We present both together:
\begin{lemma}[Detectability lemma and union bound]\label{lemma:detectabilityandconverse}
	Let $H=\sum_i Q_i$, where $Q_i$ are orthogonal projectors in an arbitrary order.
	Assume that each $Q_i$ commutes with all but $g$ of the projectors. 
	\jh{Moreover, assume that $H$ is frustration-free, i.e. a state $|\psi\rangle$ with $\langle \psi|H|\psi\rangle=0$ exists.}
	Then, for a state $|\psi^{\perp}\rangle$ orthogonal to the ground space of $H$,
	\begin{equation}\label{eq:detectability}
	\sqrt{1-4\Delta(H)}\leq \left|\left|\prod_{i} (\iden-Q_i)|\psi^{\perp}\rangle\right|\right|_2\leq \sqrt{\frac{1}{\Delta(H)/g^2+1}}.
	\end{equation}
\end{lemma}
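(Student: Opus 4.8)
The plan is to assemble the statement from two published results rather than to reprove it from scratch, though I will indicate the key step of each. The upper bound in~\eqref{eq:detectability} is the detectability lemma of Aharonov et al.~\cite{aharonov2009detectability}, in the sharpened form due to Anshu, Arad and Vidick~\cite{anshu2016simple}; the lower bound is Gao's quantum union bound~\cite{gao2015quantum}, again in the form of~\cite{anshu2016simple}. Throughout write $A:=\prod_i(\iden-Q_i)$ for the ``detection operator'' and let $|\psi^{\perp}\rangle$ be a unit vector orthogonal to the ground space $\bigcap_i\ker Q_i$ of $H$.

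For the upper bound I would first use the commutation hypothesis to split $A$ into layers. Since each $Q_i$ fails to commute with at most $g$ of the others, the conflict graph on the index set has maximum degree $g$, hence is $(g+1)$-colourable; collecting factors by colour gives $A=A_{g+1}\cdots A_1$ with $A_\ell=\prod_{i\in S_\ell}(\iden-Q_i)=\iden-R_\ell$, where $R_\ell$ is the orthogonal projector onto $\sum_{i\in S_\ell}\mathrm{range}(Q_i)$, a projector because the $Q_i$ within a layer commute. The heart of the detectability-lemma argument is then to show that $\|A|\psi^{\perp}\rangle\|_2^{2}$ is controlled by $\sum_\ell\langle\psi^{\perp}|R_\ell|\psi^{\perp}\rangle$, which in turn is comparable to $\langle\psi^{\perp}|H|\psi^{\perp}\rangle\ge\Delta(H)$; the combinatorial price for translating between the layer projectors $R_\ell$ and the individual $Q_i$ is precisely the factor $g^{2}$, and a careful accounting yields $\|A|\psi^{\perp}\rangle\|_2\le(\Delta(H)/g^{2}+1)^{-1/2}$. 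I would simply cite~\cite{anshu2016simple} for this estimate.

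For the lower bound I would telescope. Putting $v_0:=|\psi^{\perp}\rangle$ and $v_k:=(\iden-Q_k)\cdots(\iden-Q_1)|\psi^{\perp}\rangle$ and using that each $\iden-Q_k$ is an orthogonal projector, Pythagoras gives $\|v_{k-1}\|_2^{2}-\|v_k\|_2^{2}=\|Q_k v_{k-1}\|_2^{2}$, so that $1-\|A|\psi^{\perp}\rangle\|_2^{2}=\sum_k\|Q_k v_{k-1}\|_2^{2}$. Gao's bookkeeping --- comparing each $Q_k v_{k-1}$ with $Q_k|\psi^{\perp}\rangle$ and controlling the running error $\|v_{k-1}-|\psi^{\perp}\rangle\|_2^{2}=1-\|v_{k-1}\|_2^{2}$ --- then bounds $\sum_k\|Q_k v_{k-1}\|_2^{2}\le 4\sum_k\|Q_k|\psi^{\perp}\rangle\|_2^{2}=4\langle\psi^{\perp}|H|\psi^{\perp}\rangle$. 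Applying this to a lowest-energy eigenvector of $H$ in the orthogonal complement of the ground space, where $\langle\psi^{\perp}|H|\psi^{\perp}\rangle=\Delta(H)$, gives $\|A|\psi^{\perp}\rangle\|_2^{2}\ge 1-4\Delta(H)$ and hence the claim.

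The difficulty here is bookkeeping rather than conceptual: obtaining the sharp $g^{2}$ dependence in the upper bound needs the layer-merging analysis of~\cite{anshu2016simple} rather than a naive union over the $g+1$ colour classes, and one must be mildly careful that the $\Delta(H)$ in the lower bound really is the spectral gap, which is why that inequality is applied at the extremal excited eigenvector and not at an arbitrary state orthogonal to the ground space. With these references in hand the rest is the assembly above.
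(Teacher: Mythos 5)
Your proposal is correct and matches the paper's treatment: the paper gives no proof of this lemma, presenting it as the combination of the detectability lemma of \cite{aharonov2009detectability} in the sharpened form of \cite{anshu2016simple} (upper bound) with Gao's quantum union bound \cite{gao2015quantum,anshu2016simple} (lower bound), which is exactly the assembly you describe. Your sketches of both ingredients are accurate, and you correctly flag the one subtlety --- that the lower bound with $\Delta(H)$ rather than $\langle\psi^{\perp}|H|\psi^{\perp}\rangle$ is obtained at the extremal excited eigenvector, which is how the paper applies it (inside a maximum over $|\psi^{\perp}\rangle$) in the proof of Theorem~\ref{thm:unconditionalgap}.
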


Finally, we introduce the Clifford group, which plays a crucial role in the proof of Theorem~\ref{thm:unconditionalgap}:
The $n$-qubit Clifford group $\mathrm{Cl}(n)$ is the unitary normalizer of the Pauli group $\mathcal{P}_n$:
\begin{equation}
\label{eq:clifford-group}
\mathrm{Cl}(n) := \left\{ U \in U(2^n, \mathbb{Q}[i]) \; \big| \; U\mathcal{P}_n U^\dagger \subset \mathcal{P}_n \right\}.
\end{equation}
Our interest stems from the fact that this group forms a unitary $2$-design~\cite{Webb15,Zhu15}, while every element has a polynomially upper bounded circuit complexity~\cite{aaronson2004improved,bravyi2021hadamard}.

\section{Revisiting the spectral gap of random quantum circuits}
Here we review the key steps in the proof technique in Ref.~\cite{brandao_local_2016}. 
We introduce the notation $P_{H}:=M(\mu_H,t)$ and $P_{H,m}$ to specify that we are talking about the Haar moment operator on $m$ qubits.
Ref.~\cite{brandao_local_2016} introduced the frustration-free Hamiltonian
\begin{equation}
H_{n,t}:=\sum_{i=1}^{n}(\iden-\iden_{i-1}\otimes P_{H,2}\otimes \iden_{n-i-1}).
\end{equation}
With this definition it holds that~\cite[Lem.~16]{brandao_local_2016}:
\begin{equation}\label{eq:gapreduction}
g(\nu_n,t)=1-\frac{\Delta(H_{n,t})}{n},
\end{equation}
where $\Delta(H_{n,t})$ denote the spectral gap of $H_{n,t}$.

Using Nachtergaele's martingale method for spin systems, Ref.~\cite{brandao_local_2016} proves that for all $n$ and $t$ such that $n\geq \lceil 2.5 \log_2(4t)\rceil$ the following bound holds:
\begin{equation}\label{eq:BHHgapreduction}
\Delta(H_{n,t})\geq \frac{\Delta(H_{\lceil 2.5 \log_2(4t)\rceil,t})}{4 \lceil 2.5 \log_2(4t)\rceil}.
\end{equation}
Remarkably, this reduction not only establishes a constant spectral gap for every fixed value of $t$, but also turns every bound that only depends exponentially on the system size into one that depends polynomially on $t$.

The reduction~\eqref{eq:BHHgapreduction} to smaller system sizes is then combined with an spectral gap that holds for all values of $t$.
We call such a gap \textit{unconditional}:
\begin{equation}\label{eq:BHHunconditionalgap}
\Delta(H_{n,t})\geq \frac{1}{n(5e)^n}.
\end{equation}
This bound is obtained from a convergence result: By applying the path coupling technique on the unitary group~\cite{oliveira2009convergence}, the authors of Ref.~\cite{brandao_local_2016} show that random quantum circuits converge to the Haar measure in the Wasserstein distance.
This convergence result is strong enough to imply the bound~\eqref{eq:BHHunconditionalgap} on the spectral gap.

How does the exponent of $t$ for the design depth in Ref.~\cite{brandao_local_2016} decompose exactly? 
Ignoring logarithmic factors, the exponent is approximately
\begin{equation}\label{eq:exponent_dissected}
10.41~~\approx~~~\underbrace{1}_{1.}\quad+\quad \left(\underbrace{2}_{2.}\quad+\quad \underbrace{0.5}_{3.}\right)\quad \times \quad \left(\underbrace{\log_2{5}}_{4.}\quad+\quad\underbrace{\log_2(e)}_{5.}\right).
\end{equation}
\begin{enumerate}
	\item This contribution comes from the conversion of a spectral gap to the stronger notions of approximate designs in Definition~\ref{def:approxdesign}.
	This contribution is therefore a necessary consequence of using spectral gaps to bound the design depth.
	\item Ref.~\cite{brandao_local_2016} establishes approximate orthogonality of permutations of $t$ tensor factors of $\mathbb{C}^D$ in the regime $D\geq t^2$.
	In the application of Nachtergaele's martingale method~\cite{nachtergaele1996spectral}, this fact is repeatedly used and the second contribution can be directly linked to the square in the condition $t^2\leq D$.
	More precisely, the following inequality holds
	\begin{equation}\label{eq:BHHinequality}
	\left|\left|M(\mu_H,t)-2^{-tn}\sum_{\pi\in S_t}\mathrm{vec}(r(\pi))\mathrm{vec}(r(\pi))^{\dagger}\right|\right|_{\infty}\leq \frac{t^2}{D},
	\end{equation}
	where $r$ denotes the representation of the symmetric group that permutes the $t$ tensor factors.
	The second operator in Eq.~\eqref{eq:BHHinequality} is called the frame operator of the permutations.
	It has the same eigenvalues as the Gram matrix of the permutations and is also the moment operator of random matrices with i.i.d. Gaussian entries by Wick's theorem.
    The approximate orthogonality of the permutations is therefore equivalent to the moments of the Haar measure being approximately equal to those of Gaussian matrices with i.i.d. entries. 
    Each monomial of degree $t$ only contains information about some $t\times t$ submatrix.
    Indeed, small submatrices $t\sim o(\sqrt{D})$ of Haar random unitaries are conjectured to be close to an i.i.d. Gaussian ensemble in total variation distance~\cite{aaronson2011computational,jiang2006many}.
    However, as the entries of Haar random matrices are correlated, large submatrices cannot be close to an i.i.d Gaussian ensemble.
    Consequently, we cannot expect to make the RHS of Eq.~\eqref{eq:BHHinequality} scale better than $t/D$.
	\item This contribution is a consequence of Nachtergaele's bound.
	More precisely, the final bound of Ref.~\cite[Lem.~18]{brandao_local_2016} is 
	\begin{equation}
	\Delta(H_{n,t})\geq \frac{\Delta(H_{l,t})}{4 l},
	\end{equation}
	for any $l$ that satisfies the inequality
	\begin{equation}\label{eq:BHHreductioninequality}
	\frac{6 t^2}{2^l}\leq \frac12 l^{-\frac12}.
	\end{equation}
	The choice $l=2.5 \lceil \log_2(4t)\rceil$ can be seen to always satisfy~\eqref{eq:BHHreductioninequality}.
    This bound gives a comparably simple expressions at the expense of a slightly worse than optimal exponent in $t$.
	However, as the objective of this work is to pedantically optimize the exponent of $t$, we use modern bounds on Lambert's W function to obtain such a tight bound in $l$.
	We present the resulting bound below in Observation~\ref{observation:improvedreduction}.
	\item The last two contributions are a consequence of the unconditional spectral gap in Eq.~\eqref{eq:BHHunconditionalgap}.
	In general, contribution 4. comes from an exponential decay $(q^2+1)^{-n}$ (for qubits $q=2$) of the unconditional bound on the gap.
	This part of the bound~\eqref{eq:BHHunconditionalgap} can be proven for random ``staircase circuits'' of the form $U_{1,2} U_{2,3}\ldots U_{n-1,n}$, with Haar random $2$-local unitaries $U_{i,i+1}$.
	The same bound holds for permutations of the unitaries $U_{i,i+1}$.
	In this work, we observe that the application of the path coupling method in Ref.~\cite{brandao_local_2016} uses the full Haar randomness of the first gate $U_{1,2}$ only.
	For all other gates, we merely need the second moments of the Haar measure and we would hence obtain the same bound if the gates $U_{2,3},\ldots,U_{n-1}$ are drawn uniformly from the Clifford group. 
	The key idea in our improved bound is to go one step further and directly apply the full $n$-qubit Clifford group and only worry about locality once this convergence result is translated \jh{to} spectral gaps.
	\item Permutations of these staircase random walks appear with non-zero probability after $n$ steps of local random quantum circuits. 
	More precisely, these are realized with a probability of $n!/n^n$.
	This explains the appearance of Euler's constant $e$ as Stirling's approximation of the factorial is $n!\sim (n/e)^n$.
	By applying Gao's union bound directly to the moment operators of Clifford brickwork circuits approximating the uniform distribution on the Cliffords, we can circumvent this counting step entirely. 
\end{enumerate}

We observe that the bound~\eqref{eq:BHHgapreduction} can be slightly improved.
A detailed argument for the following can be found in Appendix~\ref{app:improvedreduction}.
\begin{observation}[Improved gap reduction]\label{observation:improvedreduction}
	For all $n\geq \lceil 2\log_2(4t)+1.5\sqrt{\log_2(4t)}\rceil$ the following bound holds:
	\begin{equation}\label{eq:improvedreduction}
	\Delta(H_{n,t})\geq \frac{\Delta\left(H_{\lceil 2\log_2(4t)+1.5\sqrt{\log_2(4t)}\rceil,t}\right)}{4\left\lceil 2\log_2(4t)+1.5\sqrt{\log_2(4t)}\right\rceil}.
	\end{equation}
\end{observation}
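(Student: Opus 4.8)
I would start from the quantitative martingale bound behind Eq.~\eqref{eq:BHHgapreduction}: \cite[Lem.~18]{brandao_local_2016} guarantees $\Delta(H_{n,t})\ge \Delta(H_{l,t})/(4l)$ for every $n\ge l$ provided $l$ satisfies the window condition $6t^2/2^l\le\tfrac12 l^{-1/2}$ of Eq.~\eqref{eq:BHHreductioninequality}, so the whole task is to pin down the smallest $l$ for which that condition holds. Multiplying through by $2^{l+1}\sqrt l$ and taking a base-$2$ logarithm rewrites the condition as
\[
l-\tfrac12\log_2 l\ \ge\ \log_2\!\big(12t^2\big)\ =\ 2\log_2(4t)+\log_2\tfrac34 .
\]
Because $l\mapsto l-\tfrac12\log_2 l$ is strictly increasing on $l\ge1$, it is enough to exhibit one admissible value and, since taking a ceiling only increases the left-hand side, I would simply check that $l_0:=2\log_2(4t)+1.5\sqrt{\log_2(4t)}$ works. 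The exact threshold $l^{\ast}$ (which solves $2^{l}=12t^2\sqrt l$) can be written through a branch of Lambert's $W$ function, and the constants $2$ and $1.5$ in $l_0$ are precisely what classical explicit bounds on $W$ produce; but for the proof it is cleanest to verify the displayed inequality at $l_0$ directly.

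Substituting $x:=\log_2(4t)$ (so $x\ge 2$ for every integer $t\ge1$) and $l_0=2x+1.5\sqrt x$ and cancelling the $2x$ terms, the inequality to check is
\[
1.5\sqrt x\ \ge\ \tfrac12\log_2\!\big(2x+1.5\sqrt x\big)-\log_2\tfrac43 .
\]
Here the left-hand side grows like $\sqrt x$ while the right-hand side is slowly varying, so it suffices to check the endpoint $x=2$ (where the left side is $1.5\sqrt2\approx2.12$ and the right side is $\tfrac12\log_2 6.12-\log_2\tfrac43\approx0.9$) and then to show the difference of the two sides is nondecreasing on $x\ge2$. Differentiating, the latter amounts to $0.75/\sqrt x\ \ge\ (2+0.75/\sqrt x)\big/\big(2\ln 2\,(2x+1.5\sqrt x)\big)$, which after the crude bounds $2x+1.5\sqrt x\ge 2x$ and $0.75/\sqrt x\le 0.75/\sqrt2$ reduces to $0.75\sqrt x\ge 0.913$, true for all $x\ge2$. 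Plugging the resulting admissible $l=\lceil 2\log_2(4t)+1.5\sqrt{\log_2(4t)}\rceil$ back into \cite[Lem.~18]{brandao_local_2016} then delivers Eq.~\eqref{eq:improvedreduction}.

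The argument is entirely bookkeeping rather than conceptual, and the two places that need genuine care are: (i) making sure the chosen constants, together with the slack coming from the ceiling and from the $\log_2\tfrac34$ correction, really do absorb the logarithmic term uniformly all the way down to $t=1$ rather than only asymptotically; and (ii) tracking the precise hypotheses of the martingale reduction in Ref.~\cite{brandao_local_2016}, in particular that it is legitimate to reduce $n$ down to $l$ itself and not merely to some strictly smaller system size. These routine details are the ones deferred to Appendix~\ref{app:improvedreduction}.
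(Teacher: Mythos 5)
Your proposal is correct and follows the same logical skeleton as the paper: reduce everything to verifying that $l_0=2\log_2(4t)+1.5\sqrt{\log_2(4t)}$ satisfies the window condition~\eqref{eq:BHHreductioninequality}, use monotonicity of $l\mapsto l-\tfrac12\log_2 l$ to absorb the ceiling, and then invoke \cite[Lem.~18]{brandao_local_2016}. The one genuine difference is how the key inequality is certified. The paper treats $l-\tfrac{1}{2\ln 2}\ln l\geq 2\log_2(4t)$ as an instance of $y-a\ln y=b$, inverts it exactly through the $W_{-1}$ branch of Lambert's W function, and then applies the explicit bound $W_{-1}(e^{-x-1})\geq -1-\sqrt{2x}-x$ from Ref.~\cite{chatzigeorgiou2013bounds}; this is what produces the constants $2$ and $1.5$ in the first place and shows the choice is essentially tight. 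You instead take those constants as given and verify the displayed inequality directly by checking the endpoint $x=\log_2(4t)=2$ (i.e.\ $t=1$) and showing the difference of the two sides is nondecreasing via a crude derivative bound. Your route is more elementary and self-contained (no external bound on $W$ needed), at the cost of not explaining why the candidate $l_0$ has this particular form; the paper's route buys the reverse trade-off. Your arithmetic checks out ($1.5\sqrt{2}\approx 2.12$ versus $\approx 0.89$ at the endpoint, and $0.75\sqrt{x}\geq 2.53/(4\ln 2)\approx 0.91$ for $x\geq 2$), and your worry in point (ii) is harmless since the case $n=l$ of~\eqref{eq:improvedreduction} is trivially true.
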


\section{Results}
\subsection{Tensor product expanders and unitary designs}
Our first result is a bound on the spectral gap of the moment operator of local random quantum circuits that is independent of the moment $t$:

\begin{theorem}[Unconditional gap]\label{thm:unconditionalgap}
	We have the following bound for all $t\geq 1$:
	\begin{equation}
	g(\nu_n,t)\leq 1-120000^{-1}n^{-5} 2^{-2n}
	\end{equation}
\end{theorem}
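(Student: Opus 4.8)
The plan is to prove the unconditional spectral gap by analyzing the auxiliary random walk $\nu_{\rm Cl}$ that interleaves a global uniformly random Clifford unitary with a Haar random single-qubit unitary, and then descending to locality. First I would establish a Wasserstein-distance convergence bound for this auxiliary walk using the path coupling technique on the unitary group~\cite{oliveira2009convergence}: since the Clifford group is a unitary $2$-design, after applying one Clifford the second moments already match the Haar measure, so one can set up a coupling of two copies of the walk and contract the Wasserstein distance using only the fresh randomness of the single-qubit Haar gate — mirroring the staircase-circuit analysis of Ref.~\cite{brandao_local_2016} but with the entire $n$-qubit Clifford standing in for the product of $2$-local gates $U_{2,3}\cdots U_{n-1,n}$. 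The payoff is that we avoid the $(q^2+1)^{-n}$ and $n!/n^n$ losses (contributions 4.\ and 5.\ in the dissection): the contraction factor should be polynomial in $n$ rather than exponential, yielding a Wasserstein mixing time scaling like $O(n\log n)$ or so, which translates into a spectral gap for the moment operator of $\nu_{\rm Cl}$ of order $\Omega(n^{-1})$ (up to logs), uniformly in $t$.

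Next I would replace the global Clifford by a local random walk. Because $\mathrm{Cl}(n)$ is a finite group and every element has polynomially bounded circuit complexity~\cite{aaronson2004improved,bravyi2021hadamard}, the comparison technique of Diaconis and Saloff-Coste~\cite{diaconis1993comparison} gives that a local random walk generating the Clifford group (e.g.\ random $2$-local Clifford gates in a brickwork pattern) mixes to the uniform measure on $\mathrm{Cl}(n)$ in depth polynomial in $n$ — in fact $O(n^{\rm const})$ — with an exponentially small residual. One then forms a Clifford brickwork circuit of that depth interleaved with single-qubit Haar gates and argues that its moment operator is exponentially close (in the relevant norm) to $M(\nu_{\rm Cl},t)$, so it inherits essentially the same spectral gap $\Omega(n^{-1})$ up to polynomial factors in $n$.

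The final step is to relate this local-but-layered Clifford-plus-single-qubit circuit back to the genuine $1$D local random quantum circuit $\nu_n$, i.e.\ to $\Delta(H_{n,t})$ via Eq.~\eqref{eq:gapreduction}. Here I would invoke Gao's quantum union bound~\cite{gao2015quantum,anshu2016simple} together with the detectability lemma (Lemma~\ref{lemma:detectabilityandconverse}): the Clifford brickwork circuit of depth $d = \mathrm{poly}(n)$ interleaved with Haar single-qubit gates can be decomposed into $2$-local Haar-random pieces, and each such piece appears with nonzero probability inside a string of $\nu_n$-steps; applying the union bound to the product of the corresponding local projectors lets one lower bound $\Delta(H_{n,t})$ in terms of the gap we already established, picking up only factors polynomial in the circuit depth $d$ and in $n$. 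Tracking these polynomial factors carefully and optimizing the exponents should deliver the claimed bound $g(\nu_n,t)\leq 1-120000^{-1}n^{-5}2^{-2n}$; the residual $2^{-2n}$ is the price of converting a brickwork-of-Cliffords gap into the $1$D-local gap via detectability-lemma counting, while the $n^{-5}$ collects the polynomial overheads from path coupling, the Diaconis–Saloff-Coste comparison, and the union bound. The main obstacle I anticipate is the path coupling step: making the coupling of two copies of the Clifford-plus-single-qubit walk genuinely contract — and quantifying the contraction with only a polynomial-in-$n$ loss rather than an exponential one — requires carefully exploiting the $2$-design property of the Clifford group to control how the single-qubit perturbation spreads, and getting the constants and the precise power of $n$ right will be the delicate part.
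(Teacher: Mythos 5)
Your overall architecture matches the paper's: an auxiliary walk interleaving global random Cliffords with a single-qubit Haar gate, path coupling for its Wasserstein convergence, the Diaconis--Saloff-Coste comparison to localize the Clifford, and Gao's union bound plus an operator inequality to descend to $\Delta(H_{n,t})$. But your central quantitative claim is wrong. The path-coupling analysis of the auxiliary walk does \emph{not} give a contraction factor polynomial in $n$, nor a spectral gap $\Omega(n^{-1})$ uniform in $t$. Each step injects only the $O(1)$ real parameters of one single-qubit Haar unitary into the $4^{n}$-dimensional group $U(2^n)$, so no coupling can contract faster than $1-\Theta(4^{-n})$ per step; concretely, the optimal coupling contracts the squared Frobenius--Wasserstein distance by $1-\tfrac{3}{2^{2n}-1}$, because the Clifford $2$-design twirl spreads the infinitesimal perturbation $H$ so that tracing down to the single refreshed qubit captures only a $\Theta(2^{-n})$ fraction of $\Tr[H^2]$, and the normalization of the partial trace costs another factor $2^{-(n-1)}$. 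Hence $g(\sigma,t)\le 1-\tfrac32\tfrac{1}{2^{2n}-1}$: the gap of the auxiliary walk is itself $\Theta(2^{-2n})$, and this --- not the detectability-lemma bookkeeping --- is the origin of the $2^{-2n}$ in the theorem. Your attribution of the exponential factor to ``converting a brickwork-of-Cliffords gap into the $1$D-local gap'' is backwards: that conversion costs only polynomial factors, which is exactly the advance over Ref.~\cite{brandao_local_2016}. Had your claimed $\Omega(n^{-1})$ gap for the auxiliary walk been provable, it would resolve the main open problem (a subexponential unconditional gap, hence a robust Brown--Susskind statement); it is not known.

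A second, smaller concern: you say each $2$-local piece of the layered circuit ``appears with nonzero probability inside a string of $\nu_n$-steps.'' That is the counting route of Ref.~\cite{brandao_local_2016} and is precisely what reintroduces the $n!/n^n\sim e^{-n}$ loss. The correct move avoids any such counting: apply the quantum union bound to the fixed product of projectors making up $M(\nu^{\rm Cl,bw}_n,t)^{k_n}M(\mu_{H,1},t)M(\nu^{\rm Cl,bw}_n,t)^{k_n}$ to lower bound the gap of the auxiliary Hamiltonian $\tilde H=\sum_i Q_i$ by $\tfrac18\tfrac{1}{2^{2n}-1}$, and then compare $\tilde H$ to $H_{n,t}$ term by term using $P_{\mathrm{Cl},2}\ge P_{H,2}$ and $P_{H,1}\ge P_{H,2}$, losing only the multiplicity $2k_n+1=O(n^4)$ with $k_n=O(n^4)$ from the comparison technique. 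Together with $g(\nu_n,t)=1-\Delta(H_{n,t})/n$ this yields the stated $n^{-5}2^{-2n}$.
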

This can be combined with the slightly improved reduction in Observation~\ref{observation:improvedreduction} to yield the following bounds on the tensor product expander properties of local random quantum circuits:
 \begin{theorem}[Tensor product expanders]\label{thm:finalgapbounds}
  We have the following bounds for $n\geq \lceil 2\log_2(4t)+1.5\sqrt{\log_2(4t)}\rceil$:
  \begin{itemize}
  	\item $g(\nu_n,t)\leq 1- \left(Cn\ln^5(t)t^{4+3\frac{1}{\sqrt{\log_2(t)}}}\right)^{-1}$
  	\item $g(\nu^{\mathrm{bw}}_n,t)\leq  1-\left(3C\ln^5(t)t^{4+3\frac{1}{\sqrt{\log_2(t)}}}\right)^{-1}$,
  \end{itemize}
where the constant can be taken to be $C= 10^{13}$.
 \end{theorem}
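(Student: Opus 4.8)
All the genuine work is already done: Theorem~\ref{thm:unconditionalgap} provides the $t$-independent gap and Observation~\ref{observation:improvedreduction} the transfer down to logarithmic system size. What remains is to chain the two together and to carry the constants and the $o(1)$-exponent through a handful of elementary inequalities. By Eq.~\eqref{eq:gapreduction} the bound of Theorem~\ref{thm:unconditionalgap} is equivalent to the Hamiltonian gap estimate $\Delta(H_{m,t})\ge (120000)^{-1}m^{-4}2^{-2m}$, valid for all integers $m\ge1$ and all $t\ge1$, and it is this form that feeds into Observation~\ref{observation:improvedreduction}.

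\textbf{Local circuits.} Write $l:=\lceil 2\log_2(4t)+1.5\sqrt{\log_2(4t)}\rceil$. For $n\ge l$, Eq.~\eqref{eq:improvedreduction} gives $\Delta(H_{n,t})\ge \Delta(H_{l,t})/(4l)$, and substituting the reformulation of Theorem~\ref{thm:unconditionalgap} at $m=l$ yields $\Delta(H_{n,t})\ge (480000)^{-1}l^{-5}2^{-2l}$. I would then bound the right-hand side: from $l\le 2\log_2(4t)+1.5\sqrt{\log_2(4t)}+1$ one gets $2^{2l}\le 4\,(4t)^4\,2^{3\sqrt{\log_2(4t)}}$, and since $\sqrt{\log_2(4t)}-\sqrt{\log_2 t}=2\big(\sqrt{\log_2(4t)}+\sqrt{\log_2 t}\big)^{-1}\le 1/\sqrt{\log_2 t}$ we have $2^{3\sqrt{\log_2(4t)}}\le 2^{3/\sqrt{\log_2 t}}\,t^{3/\sqrt{\log_2 t}}$, whose first factor is $O(1)$ for $t\ge2$; similarly $l=O(\log_2(4t))=O(\ln t)$, so $l^{5}=O(\ln^5 t)$. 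Collecting the explicit constants gives $\Delta(H_{n,t})\ge\big(C\ln^5(t)\,t^{4+3/\sqrt{\log_2 t}}\big)^{-1}$ with $C$ as in the statement, and dividing by $n$ and invoking Eq.~\eqref{eq:gapreduction} once more delivers the first bullet.

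\textbf{Brickwork circuits.} I would obtain the second bullet from the detectability lemma, Lemma~\ref{lemma:detectabilityandconverse}. Write $H_{n,t}=\sum_i Q_i$ with $Q_i:=\iden-\iden_{i-1}\otimes P_{H,2}\otimes\iden_{n-i-1}$; each $Q_i$ is a projector, and on the ring $Q_i$ commutes with all $Q_j$ except $Q_{i\pm1}$, so the overlap parameter is $g=2$. Ordering the $Q_i$ so that the first brickwork layer precedes the second, and using that projectors with disjoint support commute, $\prod_i(\iden-Q_i)$ is exactly the brickwork moment operator $M(\nu^{\mathrm{bw}}_n,t)$ of Definition~\ref{def:RQCs}. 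The ground space of $H_{n,t}$ coincides with the image of $P_H=M(\mu_H,t)$ (being fixed by all nearest-neighbour Haar averages is the same as being Haar-fixed, since the nearest-neighbour $U(4)$'s generate $U(2^n)$), and $M(\nu^{\mathrm{bw}}_n,t)P_H=P_H M(\nu^{\mathrm{bw}}_n,t)=P_H$, so $M(\nu^{\mathrm{bw}}_n,t)-P_H=(\iden-P_H)M(\nu^{\mathrm{bw}}_n,t)(\iden-P_H)$ and therefore $g(\nu^{\mathrm{bw}}_n,t)\le\sup_{|\psi^\perp\rangle}\big\|\prod_i(\iden-Q_i)|\psi^\perp\rangle\big\|_2$. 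The upper estimate of Eq.~\eqref{eq:detectability} with $g=2$ then gives $g(\nu^{\mathrm{bw}}_n,t)\le\big(1+\Delta(H_{n,t})/4\big)^{-1/2}$; since the bound on $\Delta(H_{n,t})$ from the previous step is far below $1$, monotonicity of $x\mapsto(1+x)^{-1/2}$ together with $(1+x)^{-1/2}\le 1-x/4$ for $x\in[0,1]$ turns this into $g(\nu^{\mathrm{bw}}_n,t)\le 1-\Delta(H_{n,t})/16\le 1-\big(3C\ln^5(t)\,t^{4+3/\sqrt{\log_2 t}}\big)^{-1}$.

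\textbf{Expected difficulty.} There is no substantive obstacle here — the theorem is a corollary of Theorem~\ref{thm:unconditionalgap} and Observation~\ref{observation:improvedreduction}. The only place demanding any care is the constant bookkeeping of the middle step, in particular checking that absorbing the ceiling and replacing $\sqrt{\log_2(4t)}$ by $1/\sqrt{\log_2 t}$ in the exponent costs only uniformly bounded factors, and dealing separately with the finitely many small $t$ for which $\log_2 t$ is not bounded away from $0$ and the stated bounds are trivial or vacuous in any case.
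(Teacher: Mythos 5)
Your proposal follows exactly the paper's proof: the first bullet is obtained by inserting the unconditional gap of Theorem~\ref{thm:unconditionalgap} (in the equivalent form $\Delta(H_{l,t})\ge (120000)^{-1}l^{-4}2^{-2l}$) into the reduction of Observation~\ref{observation:improvedreduction}, and the second bullet by the upper branch of the detectability lemma (Lemma~\ref{lemma:detectabilityandconverse}) with overlap parameter $g=2$. You in fact carry out more of the constant and exponent bookkeeping than the paper itself does; the only caveat, which you already flag, is that absorbing everything into $C=10^{13}$ and the factors $\ln^5(t)$ and $t^{3/\sqrt{\log_2 t}}$ requires treating the finitely many smallest values of $t$ separately.
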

	\begin{proof}
	The bound on $g(\nu_{n},t)$ follows from plugging the unconditional bound
	\begin{equation}
	\Delta(H_{m,t})\geq 120000^{-1}m^{-4}2^{-2m}
	\end{equation}
	from Theorem~\ref{thm:unconditionalgap} into the reduction~\eqref{eq:improvedreduction} in Observation~\ref{observation:improvedreduction}.
	The bound on $g(\nu_n^{\mathrm{bw}})$ can be obtained from this by applying the detectability lemma (Lemma~\ref{lemma:detectabilityandconverse}):
	\begin{equation}
	g(\nu^{\mathrm{bw}}_n)\leq \sqrt{\frac{1}{\Delta(H_{n,t})/4+1}}\leq 1-\left(3C\ln^5(t))t^{4+3\frac{1}{\sqrt{\log_2(t)}}}\right)^{-1},
	\end{equation}
	which is the second part of Theorem~\ref{thm:finalgapbounds}.
\end{proof}
By Lemma~\ref{lemma:gtodesign}, this immediately implies the following corollary, which is the main application of Theorem~\ref{thm:finalgapbounds}:
\begin{corollary}[Unitary designs]\label{cor:unitarydesigns}
	For $n\geq \lceil 2\log_2(4t)+1.5\sqrt{\log_2(4t)}\rceil$, the following bounds hold
	\begin{itemize}
		\item Local random quantum circuits are $\varepsilon$-approximate unitary $t$-designs after 
		\begin{equation}\label{eq:designscaling}
		k\geq Cn\ln^5(t)t^{4+3\frac{1}{\sqrt{\log_2(t)}}}(2nt+\log_2(1/\varepsilon))
		\end{equation}
		steps.
		\item Brickwork quantum circuits are $\varepsilon$-approximate unitary $t$-designs after 
		\begin{equation}
		k\geq C\ln^5(t)t^{4+3\frac{1}{\sqrt{\log_2(t)}}}(2nt+\log_2(1/\varepsilon))
		\end{equation}
		steps.
	\end{itemize} 
\end{corollary}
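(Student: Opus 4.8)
The plan is to feed the per-step spectral gap estimates of Theorem~\ref{thm:finalgapbounds} through the amplification inequality~\eqref{eq:gconvolution} and then invoke Lemma~\ref{lemma:gtodesign}; once those inputs are in hand the argument is pure bookkeeping. First I would record that running the local random quantum circuit for $k$ steps samples from the $k$-fold convolution $\nu_n^{*k}$ (and the brickwork circuit for $k$ steps from $(\nu_n^{\mathrm{bw}})^{*k}$), so that by~\eqref{eq:gconvolution} and Theorem~\ref{thm:finalgapbounds},
\[
g(\nu_n^{*k},t)\leq g(\nu_n,t)^k\leq\left(1-\frac{1}{Cn\ln^5(t)\,t^{4+3/\sqrt{\log_2 t}}}\right)^{k}\leq\exp\!\left(-\frac{k}{Cn\ln^5(t)\,t^{4+3/\sqrt{\log_2 t}}}\right),
\]
where I use the elementary bound $1-x\leq e^{-x}$.

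Next, Lemma~\ref{lemma:gtodesign} guarantees that $\nu_n^{*k}$ is an $\ep$-approximate unitary $t$-design in both senses of Definition~\ref{def:approxdesign} as soon as $g(\nu_n^{*k},t)\leq \ep/\dm^{2t}$. Since $\dm=2^n$ we have $\dm^{2t}=2^{2nt}$, so by the previous display it suffices to enforce
\[
\frac{k}{Cn\ln^5(t)\,t^{4+3/\sqrt{\log_2 t}}}\;\geq\;2nt\ln 2+\ln(1/\ep),
\]
and because $\ln 2<1$ the right-hand side is at most $2nt+\log_2(1/\ep)$. Rearranging yields exactly the claimed threshold $k\geq Cn\ln^5(t)\,t^{4+3/\sqrt{\log_2 t}}\,(2nt+\log_2(1/\ep))$. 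The brickwork statement follows verbatim from the brickwork gap estimate in Theorem~\ref{thm:finalgapbounds}, whose denominator no longer carries the factor $n$; the ample slack in the constant $C=10^{13}$ absorbs the remaining numerical factors.

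I do not expect any genuine obstacle here, since the analytic content lives entirely in Theorem~\ref{thm:unconditionalgap} and the reduction of Observation~\ref{observation:improvedreduction}. The only two points that require care are (i) checking that $k$ elementary steps realize precisely the $k$-fold convolution, so that the submultiplicativity~\eqref{eq:gconvolution} applies cleanly, and (ii) making sure the hypothesis invoked from Lemma~\ref{lemma:gtodesign} is the stronger one, $g\leq\ep/\dm^{2t}$ rather than $g\leq\ep/\dm^{t}$, since it is the exponent $2t$ that produces the additive $2nt$ term inside the parenthesis of the final bound.
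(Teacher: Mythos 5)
Your argument is exactly the paper's (implicit) proof: the paper derives the corollary by combining the per-step gap bounds of Theorem~\ref{thm:finalgapbounds} with the amplification~\eqref{eq:gconvolution} and the design criterion of Lemma~\ref{lemma:gtodesign}, which is precisely the bookkeeping you carry out, including the correct use of the stronger hypothesis $g\leq\ep/\dm^{2t}$ to produce the $2nt$ term. The only residual point is the factor of $3$ in the brickwork gap of Theorem~\ref{thm:finalgapbounds}, which (as you note) must be absorbed into the unspecified constant $C$ of the corollary rather than cancelled exactly; this is a feature of the paper's statement, not a gap in your reasoning.
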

In comparison to the exponent in Eq.~\eqref{eq:exponent_dissected}, the new exponent decomposes as follows.
\begin{equation}
5+3\frac{1}{\sqrt{\log_2(t)}}~~=~~~\underbrace{1}_{1.}\quad+\quad \left(\underbrace{2}_{2.}\quad+\quad \underbrace{1.5\frac{1}{\sqrt{\log_2(t)}}}_{3.}\right)\quad \times \quad \underbrace{2}_{4.},
\end{equation}
where we again ignored the log-factors in~\eqref{eq:designscaling}.
Here, contribution $1.$ and $2.$ appear for the same reason as $1.$ and $2.$ in Eq.~\eqref{eq:exponent_dissected}.
Contribution 3. is a direct consequence of Observation~\ref{observation:improvedreduction} and 4. follows from the improved bound on the unconditional gap in Theorem~\ref{thm:unconditionalgap}.

By applying an argument from Ref.~\cite{brandao_local_2016}, the same scaling can be implied for random quantum circuits drawn from a discrete invertible gate set with algebraic entries, using a powerful result by Bourgain and Gamburd~\cite{bourgain2012spectral}.
This bound comes with an implicit constant depending on the gate set.
Moreover, at the expense of additional polylogarithmic factors in $t$, the assumption of algebraic entries can be dropped as proven in Ref.~\cite{oszmaniec2021epsilon} using on a theorem from Ref.~\cite{varju2012random}.
Similarly, it was shown recently that the assumption of invertibility can be dropped at the expense of an additional factor of $n$~\cite{oszmaniec2021epsilon}.

\jh{In this paper we focus on random quantum circuits over qubits.
	This is because there are very concrete bounds available for the maximal circuit complexity of the multiqubit Clifford group~\cite{aaronson2004improved,bravyi2021hadamard}.
	However, the same proof strategy works for random quantum circuits defined over every local dimension that is prime because in these dimensions the Clifford group forms a unitary $2$-design.
	Therefore, all results in this paper can be proven for every prime power dimension with constants depending on the maximal circuit complexity of the multiqudit Clifford group. }

\subsection{Quantum circuit complexity}
We apply our result to the growth of circuit complexity~\cite{brandao_complexity_2019,brandao2016efficient}.
Let $\mathcal{G}$ denote a universal gate set of $2$-local gates.
Moreover, denote by $M_{\mathcal{G},R}$ the set of all unitaries that can be realized by concatenation of at most $R$ gates in $\mathcal{G}$.
\begin{definition}[Quantum circuit complexity]
	For $\delta\in(0,1]$ a
	\begin{itemize}
		\item 
 state $|\psi\rangle $ has $\delta$-state complexity
 \begin{equation}
 \mathcal{C}_{\delta}(|\psi\rangle):=\min\{R, \exists V\in M_{\mathcal{G},R}~\text{with}~\frac12 ||V|0^n\rangle\langle 0^n|V^{\dagger}-|\psi\rangle\langle\psi|||_1\leq \delta\}.
 \end{equation}
 \item unitary $U$ has $\delta$-unitary complexity
 \begin{equation}
 C_{\delta}(U):=\min\{R, \exists V\in M_{\mathcal{G},R}~\text{with}~|||\mathsf{V}-\mathsf{U}||_{\diamond}\leq \delta\},
 \end{equation}
 where $\mathsf{U}$ is the channel defined by $\mathsf{U}(\rho)=U\rho U^{\dagger}$ and $\mathsf{V}$ likewise.
 \end{itemize}
\end{definition}
We can use the following bound:
\begin{theorem}[Informal, Ref.~\cite{brandao_local_2016}]
	Let $\nu$ be a relative approximate unitary $t$-design for some $t\leq O(2^{n/2})$.
	Then, a unitary $U$ drawn from $\nu$ satisfies $\mathcal{C}_{\delta}(U)\geq \Omega(nt)$ and $\mathcal{C}_{\delta}(U|\psi\rangle)\geq \Omega(nt)$ with high probability.
\end{theorem}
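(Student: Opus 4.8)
The plan is to run a Knill--Nielsen-type counting argument, extracting the anticoncentration of a suitable degree-$t$ polynomial of $U$ from the relative design hypothesis rather than from full Haar randomness. Write $\dm=2^n$. I carry out the unitary-complexity bound; the state bound is analogous and is indicated at the end. The first step is a reduction: if $C_\delta(U)\le R$ then $\|\mathsf U-\mathsf V\|_\diamond\le\delta$ for some $V\in M_{\mathcal G,R}$, and for unitary channels one has the exact formula $\|\mathsf U-\mathsf V\|_\diamond=2\sqrt{1-\vartheta^2}$ with $\vartheta=\min\{|z|:z\in\mathrm{conv}(\mathrm{spec}(U^\dagger V))\}$ (see, e.g., \cite{watrous2018theory}). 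Since $\dm^{-1}\mathrm{Tr}(U^\dagger V)$ is the mean of the eigenvalues of $U^\dagger V$, it lies in their convex hull, so $\delta\le1$ already forces $\dm^{-1}|\mathrm{Tr}(U^\dagger V)|\ge\vartheta\ge\sqrt{1-\delta^2/4}\ge\sqrt3/2=:\tau_0$. Hence ``$C_\delta(U)\le R$'' is contained in the event ``$|\mathrm{Tr}(U^\dagger V)|\ge\tau_0\dm$ for some $V\in M_{\mathcal G,R}$''.

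The second step bounds, for a fixed $V$, the $\nu$-probability of $|\mathrm{Tr}(U^\dagger V)|\ge\tau_0\dm$ by Markov applied to the $2t$-th power. The key point is that $|\mathrm{Tr}(U^\dagger V)|^{2t}=|\langle\mathrm{vec}(U^{\otimes t})|\mathrm{vec}(V^{\otimes t})\rangle|^2$ has degree $t$ in the entries of $U$ and degree $t$ in those of $\overline U$, so it is genuinely an order-$t$ moment quantity, controlled by the $t$-design hypothesis. Averaging $\mathrm{vec}(U^{\otimes t})\mathrm{vec}(U^{\otimes t})^\dagger$ over $U\sim\nu$ yields (an unnormalised) Choi matrix $J(\Phi^{(t)}_\nu):=(\Phi^{(t)}_\nu\otimes\mathrm{id})\big(\mathrm{vec}(\iden)\mathrm{vec}(\iden)^\dagger\big)$, and $(1-\varepsilon)\Phi^{(t)}_\nu\preccurlyeq\Phi^{(t)}_{\mu_H}$ says that $(1-\varepsilon)^{-1}\Phi^{(t)}_{\mu_H}-\Phi^{(t)}_\nu$ is completely positive, equivalently that $(1-\varepsilon)^{-1}J(\Phi^{(t)}_{\mu_H})-J(\Phi^{(t)}_\nu)$ is positive semidefinite. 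Sandwiching between $|\mathrm{vec}(V^{\otimes t})\rangle$ gives
\[
\BE_{U\sim\nu}\,|\mathrm{Tr}(U^\dagger V)|^{2t}\ \le\ \frac{1}{1-\varepsilon}\,\BE_{U\sim\mu_H}\,|\mathrm{Tr}(U^\dagger V)|^{2t}\ =\ \frac{1}{1-\varepsilon}\,\BE_{U\sim\mu_H}\,|\mathrm{Tr}(U)|^{2t}\ \le\ \frac{t!}{1-\varepsilon},
\]
using left-invariance of $\mu_H$ and the Diaconis--Shahshahani bound $\BE_{\mu_H}|\mathrm{Tr}(U)|^{2t}\le t!$ (an equality for $t\le\dm$, which is the case since $t\le O(2^{n/2})$). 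Markov then gives $\Pr_{U\sim\nu}[|\mathrm{Tr}(U^\dagger V)|\ge\tau_0\dm]\le t!\,(1-\varepsilon)^{-1}(\tau_0\dm)^{-2t}$.

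The third step is the union bound over circuits. For a finite gate set $|M_{\mathcal G,R}|\le(|\mathcal G|\binom n2)^R$, each of the $\le R$ gates being given by one of $|\mathcal G|$ types and one of $\binom n2$ placements (for a continuous $\mathcal G$, pass first to a sufficiently fine operator-norm net of the gates; this only adds $O(R\log(1/\delta))$ to the exponent below). Combining with the second step and $\dm=2^n$, and assuming $\varepsilon$ bounded away from $1$,
\[
\Pr_{U\sim\nu}\big[C_\delta(U)\le R\big]\ \le\ \big(|\mathcal G|\tbinom n2\big)^{R}\,\frac{t!}{(1-\varepsilon)\,\tau_0^{2t}\,2^{2nt}}\ \le\ 2^{\,R\log_2(|\mathcal G|n^2)\,+\,t\log_2 t\,-\,2nt\,+\,O(t)}.
\]
Since $t\le O(2^{n/2})$ we have $t\log_2 t\le\tfrac n2 t+O(t)$, so the exponent is $\le R\log_2(|\mathcal G|n^2)-nt+O(t)$, which is $\le-\tfrac14 nt$ once $R\le c\,nt/\log(n|\mathcal G|)$ for a small absolute constant $c$. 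Thus, with probability $1-2^{-\Omega(nt)}$, $C_\delta(U)\ge\Omega\!\big(nt/\log n\big)$ --- the claimed $\Omega(nt)$ up to the logarithmic overhead inherent in enumerating gate placements, and exactly $\Omega(nt)$ in models with a fixed layout. The state bound runs through the same three steps with $|\mathrm{Tr}(U^\dagger V)|$ replaced by $|\langle0^n|V^\dagger U|\psi\rangle|$: closeness in trace distance gives $|\langle0^n|V^\dagger U|\psi\rangle|^2\ge1-\delta^2$ for some $V\in M_{\mathcal G,R}$, the same Choi-matrix comparison gives $\BE_{U\sim\nu}|\langle0^n|V^\dagger U|\psi\rangle|^{2t}\le(1-\varepsilon)^{-1}t!\,/\big(\dm(\dm+1)\cdots(\dm+t-1)\big)\le t!\,(1-\varepsilon)^{-1}2^{-nt}$, and the union bound gives $\Pr[\mathcal C_\delta(U|\psi\rangle)\le R]\le(|\mathcal G|\binom n2)^R\,t!\,(1-\varepsilon)^{-1}(1-\delta^2)^{-t}2^{-nt}$, again exponentially small for $R\le c\,nt/\log(n|\mathcal G|)$.

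I expect the union-bound step to be the genuine obstacle and the source of the residual logarithm: there is no cheaper enumeration of circuits of $R$ two-qubit gates than the $\approx(|\mathcal G|n^2)^R$ type-and-placement count, so the clean statement is really $\Omega(nt/\log n)$ unless one works in a model where the gate placements are not free. The remaining ingredients are routine but must be stated carefully: one should check that the two test functions are of moment order exactly $t$ (so that a $t$-design suffices, not a $2t$-design), and one should record the diamond-norm-to-trace-overlap and trace-distance-to-fidelity conversions with explicit constants so that they hold for every fixed $\delta\in(0,1)$, not only for small $\delta$.
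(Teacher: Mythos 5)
Your proof is correct and is essentially the argument behind the cited result (the paper itself states this theorem without proof, importing it from Ref.~\cite{brandao_complexity_2019}): a counting/union-bound argument in which the relative-design hypothesis transfers the Haar moment bound $\mathbb{E}_{\mu_H}|\Tr(U^\dagger V)|^{2t}\le t!$ to $\nu$ via positive semidefiniteness of the Choi-matrix difference, combined with the Aharonov--Kitaev--Nisan diamond-norm formula and Markov's inequality. The residual $1/\log(n|\mathcal G|)$ factor you flag is also present in the precise statements of that reference and is absorbed here by the ``Informal'' label and the $o(1)$ exponents in Corollary~\ref{cor:growthcomplexity}, so it does not constitute a gap.
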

Combining this theorem with Corollary~\ref{cor:unitarydesigns} yields the following lower bounds on the circuit complexity:
\begin{corollary}[Growth of quantum circuit complexity]\label{cor:growthcomplexity}
	Let $U$ be drawn from a random quantum circuit in brickwork architecture of depth $T$.
	Moreover, let $\delta\in (0,1)$ be constant in the system size.
	 Then, with \jh{probability $1-e^{-\Omega(n)}$}, we have for the
	\begin{itemize}
		\item $\delta$-state complexity $$\mathcal{C}_{\delta}(U0^n\rangle)\geq \Omega\left(T^{1/(5+o(1))}\right)$$
		\item $\delta$-unitary complexity $$\mathcal{C}_{\delta}(U)\geq \Omega\left(T^{1/(5+o(1))}\right)$$
	\end{itemize}
until $T\geq \Omega(2^{n/(2+o(1))})$.
\end{corollary}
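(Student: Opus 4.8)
The plan is to derive Corollary~\ref{cor:growthcomplexity} by straightforwardly chaining Corollary~\ref{cor:unitarydesigns} with the informal complexity theorem of Ref.~\cite{brandao_complexity_2019}, treating depth $T$ as the resource to be converted into a moment order $t$. First I would recall that a brickwork circuit of depth $T$ consists of $Tn/2$ two-local gates, and that by the second bullet of Corollary~\ref{cor:unitarydesigns} such a circuit is (say) a relative $\tfrac{1}{3}$-approximate unitary $t$-design as soon as
\begin{equation}
T \;\geq\; C\ln^5(t)\, t^{4+3/\sqrt{\log_2 t}}\bigl(2nt + 2\bigr),
\end{equation}
which for $\delta$ constant and $\varepsilon=\tfrac13$ amounts to $T = \widetilde{O}\bigl(n\, t^{5+o(1)}\bigr)$. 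Inverting this relation, a depth-$T$ brickwork circuit realizes a relative approximate $t$-design for some
\begin{equation}
t \;=\; \Omega\!\left(\bigl(T/n\bigr)^{1/(5+o(1))}\right),
\end{equation}
where the $o(1)$ absorbs both the $3/\sqrt{\log_2 t}$ term in the exponent and the $\ln^5(t)$ prefactor (since $\ln^5 t = t^{o(1)}$).

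Next I would feed this $t$ into the complexity theorem: a unitary drawn from a relative $\delta'$-approximate $t$-design with $t \leq O(2^{n/2})$ has $\mathcal{C}_\delta(U)\geq \Omega(nt)$ and $\mathcal{C}_\delta(U|0^n\rangle)\geq\Omega(nt)$ with overwhelming probability. Substituting the bound on $t$ gives, with overwhelming probability,
\begin{equation}
\mathcal{C}_\delta(U)\;\geq\;\Omega(nt)\;\geq\;\Omega\!\left(n\,(T/n)^{1/(5+o(1))}\right)\;\geq\;\Omega\!\left(T^{1/(5+o(1))}\right),
\end{equation}
and identically for the state complexity $\mathcal{C}_\delta(U|0^n\rangle)$; the extra factor of $n$ only helps, so dropping it is harmless. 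The constraint $t\leq O(2^{n/2})$ together with $t = \widetilde\Omega((T/n)^{1/5})$ is exactly what translates into the stated range $T \geq \Omega(2^{n/(2+o(1))})$ up to which the lower bound remains valid: beyond this depth the design guarantee of Corollary~\ref{cor:unitarydesigns} (valid only for $t \leq O(2^{n/(2+o(1))})$) is no longer available.

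The routine part is the algebra of inverting $T \asymp n\,t^{5+o(1)}$; the only point needing a little care is bookkeeping the $o(1)$ exponent correction so that it genuinely vanishes as $t\to\infty$ (equivalently as $n,T\to\infty$), which is why the statement is phrased with $o(1)$ rather than an explicit constant. The main conceptual obstacle — and really the only thing to be careful about — is matching the \emph{relative} approximate design notion required by the complexity theorem (Eq.~\eqref{eq:approxdesign2}) with what Corollary~\ref{cor:unitarydesigns} provides; this is precisely handled by Lemma~\ref{lemma:gtodesign}, which guarantees that the spectral gap bound of Theorem~\ref{thm:finalgapbounds} yields both notions of approximate design simultaneously once $g(\nu_n^{\mathrm{bw}},t)\leq \varepsilon/\dm^{2t}$, i.e.\ after the number of layers stated in Corollary~\ref{cor:unitarydesigns}. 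With that identification in place the corollary follows immediately.
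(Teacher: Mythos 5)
Your proposal is correct and follows essentially the same route as the paper, which proves Corollary~\ref{cor:growthcomplexity} simply by combining the informal complexity theorem of Ref.~\cite{brandao_complexity_2019} with Corollary~\ref{cor:unitarydesigns}, i.e.\ by inverting $T=\widetilde{O}\bigl(n\,t^{5+o(1)}\bigr)$ to extract $t=\Omega\bigl((T/n)^{1/(5+o(1))}\bigr)$ and substituting into the $\Omega(nt)$ bound. Your attention to the relative-design notion via Lemma~\ref{lemma:gtodesign} and to absorbing the polylogarithmic prefactors into the $o(1)$ exponent matches the intended argument; the only looseness is in the terminal depth, where the binding constraint $t\leq O(2^{n/(2+o(1))})$ from Corollary~\ref{cor:unitarydesigns} is what caps the growth, exactly as in the paper's statement.
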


For the error-robust notions of quantum circuit complexity defined above Corollary~\ref{cor:growthcomplexity} provides the strongest \jh{known} bounds for random quantum circuits over qubits of superpolynomial depth.

Ref.~\cite{brandao_complexity_2019} in fact provides \jh{an even} stronger result.
 Even an operational notion of complexity that quantifies the resources necessary to distinguish a unitary from the maximally depolarizing channel (the ``most useless'' channel) is lower bounded by $\Omega(T^{1/(5+o(1))})$.

\section{Proof of Theorem~\ref{thm:finalgapbounds} and Theorem~\ref{thm:unconditionalgap}}
Denote by $\mu_{H,1}$ the Haar measure on the subgroup $U(2)\otimes \mathbbm{1}_{n-1}$ and $\mu_{\mathrm{Cl}}$ the uniform measure on the Clifford group.
We apply the path coupling technique~\cite{bubley1997path} for the unitary group developed in Ref.~\cite{oliveira2009convergence} to the following auxiliary random walk that is defined by
\begin{equation}\label{eq:auxiliarywalk}
\sigma^{*k}:=(\mu_{\mathrm{Cl}}*\mu_{H,1}*\mu_{\mathrm{Cl}})^{*k}.
\end{equation}
This walk was defined in~\cite{haferkamp2020quantum} and shown to generate unitary $t$-designs in depth $k=O(t^{4})$ for ``small'' $t$.
More precisely, the result holds for the regime $t^2\leq O(n)$ and for a weaker definition of approximate unitary design.
\begin{figure}[h]
	\center
	\includegraphics[scale=1.17]{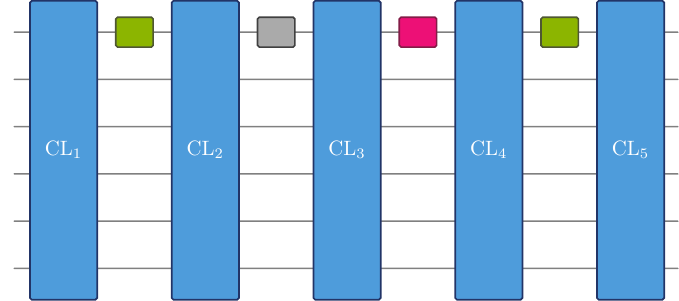}
	\caption{Four steps of the random walk $\sigma$ on $n=6$ qubits.}
\end{figure}

The first lemma is an unconditional spectral gap on the random walk $\sigma$:
\begin{lemma}\label{lemma:auxiliarywalk}
	We have for all $t\geq 1$ and $n\geq 1$ that
	\begin{equation}
	g(\sigma,t)\leq 1-\frac32 \frac{1}{2^{2n}-1}
	\end{equation}
\end{lemma}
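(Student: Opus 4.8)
The plan is to analyze the moment operator $M(\sigma,t) = M(\mu_{\mathrm{Cl}},t)\, M(\mu_{H,1},t)\, M(\mu_{\mathrm{Cl}},t)$ directly. Each of the three factors is a Hermitian projector onto the commutant of the corresponding ensemble under the $t$-th moment action $U \mapsto U^{\otimes t}\otimes\overline{U}^{\otimes t}$: the outer factors project onto the Clifford commutant $\Pi_{\mathrm{Cl}}$ on all $n$ qubits, and the middle factor projects onto the operators that commute with $u^{\otimes t}\otimes\overline{u}^{\otimes t}$ for every single-qubit $u$, which I will call $\Pi_1$. So $M(\sigma,t) = \Pi_{\mathrm{Cl}}\,\Pi_1\,\Pi_{\mathrm{Cl}}$, and $g(\sigma,t)$ equals the norm of this product restricted to the orthogonal complement of the common eigenvalue-$1$ space, which is $\mathrm{range}(P_H)$ — the span of permutation operators (and their conjugates) that commute with everything. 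Thus I want to bound $\|\Pi_{\mathrm{Cl}}\Pi_1\Pi_{\mathrm{Cl}} - P_H\|_\infty$. Since $\Pi_{\mathrm{Cl}}\Pi_1\Pi_{\mathrm{Cl}}$ is a positive operator sandwiched between two copies of the same projector, its nontrivial eigenvalues are governed by $\cos^2\theta$ where $\theta$ is a principal angle between $\mathrm{range}(\Pi_{\mathrm{Cl}})$ and $\mathrm{range}(\Pi_1)$, and equivalently by the norm of $\Pi_{\mathrm{Cl}}\Pi_1\Pi_{\mathrm{Cl}}$ on the complement of $\mathrm{range}(P_H)$.

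The key quantitative input is a lower bound on the \emph{gap between $\Pi_{\mathrm{Cl}}$ and $P_H$ itself}, i.e. how much larger the Clifford commutant is than the Haar commutant. Because the Clifford group is a $2$-design, $\Pi_{\mathrm{Cl}} = P_{H}^{(2)}$-type structure is well understood: the Clifford commutant at level $t$ is spanned by stabilizer-code projectors, and its dimension is $\prod_{k=0}^{t-2}(2^k+1)$ versus $t!$ for Haar. The cleanest route is to invoke (or reprove in one line) that the Clifford group is itself an exact unitary $2$-design and, more to the point, that $M(\mu_{\mathrm{Cl}},t) = M(\mu_{\mathrm{Cl}}^{\text{(as a }2\text{-design)}} , \dots)$ — but I actually want the second-largest eigenvalue of something, so the right statement is: the operator $\mu_{\mathrm{Cl}}*\mu_{H,1}*\mu_{\mathrm{Cl}}$ has the same fixed space as the Haar measure (this follows because $\mathrm{Cl}(n)$ together with $U(2)$ on one qubit generates a dense subgroup of $U(2^n)$), so $g(\sigma,t)<1$ strictly, and then one needs the explicit constant.

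I would get the explicit constant $\frac{3}{2}\frac{1}{2^{2n}-1}$ as follows. First, decompose the level-$t$ representation space and note that on the orthocomplement of the Haar-fixed subspace, $\Pi_{\mathrm{Cl}}$ still has eigenvalue $1$ on the (larger) Clifford commutant; call that subspace $W$, and on $W$ the middle projector $\Pi_1$ must have operator norm strictly below $1$. The quantity $1 - \|\Pi_{\mathrm{Cl}}\Pi_1\Pi_{\mathrm{Cl}}|_{W}\|$ is bounded below by $1-\|\Pi_1|_W\|$, which is a single-qubit spectral-gap computation: the worst case is controlled by the smallest nonzero gap of the single-qubit averaging channel $U\mapsto u^{\otimes t}\otimes\overline u^{\otimes t}$ acting on the relevant isotypic components, and the Clebsch–Gordan decomposition of $SU(2)$ makes this explicit, producing a factor that scales like $1/\dim$ of the ambient single-qubit moment space, i.e. a constant times $2^{-2}$ — but the $2^{-2n}$ factor instead must come from how a single-qubit rotation "sees" the $n$-qubit stabilizer structure. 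Concretely: pick any vector $v\in W$; write $v = \Pi_{\mathrm{Cl}} v$; since $v\perp \mathrm{range}(P_H)$ but $v$ lies in the span of stabilizer projectors, there is at least one stabilizer-code projector appearing with nonzero weight that is \emph{not} a permutation, and applying $\Pi_1$ — averaging over the single qubit — strictly decreases its norm because a generic single-qubit rotation does not preserve that stabilizer code; quantifying the decrease by the $L^2$-mass that leaks out, bounded below by $\tfrac12 \cdot$ (minimal probability that a Haar-random $u\in U(2)$ maps a fixed nontrivial stabilizer group off itself) and using that the relevant "bad" set of $u$ has Haar measure at least something like $1 - (\text{const})\,2^{-2n}$ after accounting for the number of single-qubit Cliffords ($24$), gives the stated $\tfrac{3}{2}(2^{2n}-1)^{-1}$.

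The main obstacle I anticipate is making the last step — the quantitative "leakage" estimate — both correct and tight enough to land exactly at $\frac32\frac{1}{2^{2n}-1}$ rather than a messier constant: this requires identifying precisely which component of the Clifford commutant survives $\Pi_{\mathrm{Cl}}$ but not $\Pi_1$, controlling the overlap of a single-qubit Haar average with $n$-qubit stabilizer projectors, and doing the $SU(2)$ representation-theoretic bookkeeping carefully (the dimension $2^{2n}-1$ strongly suggests the bound is really about the single nontrivial irrep of the adjoint-type action, i.e. that after quotienting by the trivial part one works in a space of dimension $2^{2n}-1$ and the gap of the relevant Markov operator there is exactly $3/2$ times its inverse). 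I would therefore structure the proof to isolate this as a self-contained linear-algebra lemma about $\|\Pi_{\mathrm{Cl}}\Pi_1\Pi_{\mathrm{Cl}}\|$ on the relevant subspace, prove the strict-inequality/density part first (easy), and then spend the effort on the explicit constant via the $SU(2)$ decomposition, falling back on the crude but safe bound $g(\sigma,t)\le 1 - c\, 2^{-2n}$ for an explicit $c$ if the exact $3/2$ proves delicate.
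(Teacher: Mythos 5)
Your setup is sound as far as it goes: $M(\sigma,t)=\Pi_{\mathrm{Cl}}\Pi_1\Pi_{\mathrm{Cl}}$ is indeed a product of Hermitian projectors, the fixed space is the Haar commutant because $\mathrm{Cl}(n)$ together with a single-qubit $U(2)$ is universal, and the second eigenvalue is $\cos^2\theta$ for the smallest nontrivial principal angle between the two commutants. The gap is that the entire quantitative content of the lemma — a lower bound on that angle that is \emph{uniform in $t$} — is never established; it is replaced by a heuristic about ``$L^2$-mass leaking out'' of stabilizer-code projectors under the single-qubit twirl. This is precisely the step that is hard: the Clifford commutant at level $t$ is spanned by a non-orthogonal family of operators indexed by Lagrangian subspaces of $\mathbb{Z}_2^{2t}$, its dimension grows like $\prod_{k=0}^{t-2}(2^k+1)$, and controlling the Gram matrix of this family (and hence the meaning of ``nonzero weight of a non-permutation component'') degrades as $t$ grows. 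This is exactly why the direct commutant analysis of this same walk in Ref.~\cite{haferkamp2020quantum} only works in the regime $t^2\leq O(n)$, whereas the lemma must hold for all $t\geq 1$. Your proposed counting argument (probability that a Haar-random $u\in U(2)$ moves a stabilizer group, the $24$ single-qubit Cliffords) has no $t$-dependence in it at all and cannot by itself rule out that the principal angle closes as $t\to\infty$.

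The paper avoids the commutant entirely. It bounds the Wasserstein distance $W_{\mathrm{Fro},2}(\sigma^{*k},\mu_H)$ by path coupling (Lemma~\ref{lemma:BHHsummaryofpathcoupling}): the infinitesimal contraction rate of one step of $\sigma$ is computed explicitly, and because only \emph{second} moments of the Clifford conjugation enter, the exact $2$-design property gives the contraction factor $1-\tfrac{3}{2^{2n}-1}$ — this is where your guessed ``adjoint-type'' dimension $2^{2n}-1$ actually comes from, via the twirl formula applied to $\mathrm{Tr}_{[2,n]}(CHC^{\dagger})$. The Wasserstein bound is then converted to an eigenvalue bound via $g(\nu,t)\leq 2t\,W_{\mathrm{Fro},1}(\nu,\mu_H)$, and the resulting $t$- and $n$-dependent prefactor $2\sqrt{2}\,t\,2^{n/2}$ is eliminated by using that $\sigma$ is symmetric, so $g(\sigma^{*k},t)=g(\sigma,t)^k$, taking $k$-th roots and letting $k\to\infty$. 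If you want to salvage your approach you would need a genuinely new, $t$-uniform estimate on the overlap of the Clifford commutant with the single-qubit-Haar commutant; as written, the proposal does not prove the lemma, and even the ``crude but safe'' fallback $1-c\,2^{-2n}$ is not actually derived.
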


The path coupling technique provides a bound on the convergence to the Haar measure in the Wasserstein distance, which is based on the notion of a random coupling: 
A random variable $(X,Y)$ is called a coupling for the distributions $\nu_1$ and $\nu_2$ if the marginal distributions for random variables $X$ and $Y$ are $\nu_1$ and $\nu_2$, respectively. 
This allows us to define the Wasserstein distance:
\begin{equation}
W_{\mathfrak{g},p}(\nu_1,\nu_2):=\inf \{(\mathbb{E}~\mathfrak{g}(X,Y)^p)^{1/p},\quad (X,Y)~\text{is a random coupling}\}.
\end{equation}

In Ref.~\cite{brandao_local_2016} a convergence result is proven for for the distance $W_{\mathrm{Rie},2}$.
Here, $\met_{\mathrm{Rie}}$ is the metric induced by the unique Riemannian metric that is invariant under left and right translations.
For our purposes, however, it is sufficient to prove a convergence in the weaker distance $W_{\mathrm{Fro},2}$.
We remark that the stronger convergence result can be obtained from our analyzis as in Ref.~\cite{brandao_local_2016}.

It was proven in Ref.~\cite{oliveira2009convergence} that infinitesimal contractions can be boosted to a global contraction of a random walk.
More precisely, Theorem~2 in Ref.~\cite{oliveira2009convergence} contains the following lemma as a special case:
%These subtleties work here exactly as in Ref.~\cite{brandao_local_2016} and we summarize these findings in the following lemma:
\begin{lemma}\label{lemma:BHHsummaryofpathcoupling}
	Suppose that 
	\begin{equation}
	\limsup_{X\to Y}\left\{\frac{W_{\mathrm{Fro},2}(\nu*\delta_X,\nu*\delta_Y)}{||X-Y||_{F}}:||X-Y||_F\leq \varepsilon\right\}\leq \eta,
	\end{equation}
	for some probability distribution $\nu$ over $SU(d)$.
	Then, for all probability measures $\mu_1$ and $\mu_2$,
	\begin{equation}
W_{\mathrm{Fro},2}(\nu*\mu_1,\nu*\mu_2)\leq \eta W_{\mathrm{Fro},2}(\mu_1,\mu_2).
	\end{equation}
\end{lemma}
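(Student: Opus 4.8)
As already indicated, the statement is a special case of Theorem~2 in Ref.~\cite{oliveira2009convergence} (specialize the metric to the Frobenius distance and the exponent to $p=2$), so the plan is simply to verify that the hypotheses of that theorem are met here---$SU(d)$ with the Frobenius distance is a compact metric space and convolution with a fixed $\nu$ is a measurable operation on probability measures over $SU(d)$---and, for completeness, to recall the short mechanism behind the statement. That mechanism proceeds by two reductions.

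\emph{Reduction to Dirac masses.} Writing $\nu*\mu=\int (\nu*\delta_X)\,\mathrm{d}\mu(X)$ and invoking the gluing lemma for Wasserstein distances, one builds a coupling of $\nu*\mu_1$ and $\nu*\mu_2$ by first drawing $(X,Y)$ from an optimal coupling of $(\mu_1,\mu_2)$ and then, conditionally on $(X,Y)$, from a near-optimal coupling of $(\nu*\delta_X,\nu*\delta_Y)$. This yields
\begin{equation}
W_{\mathrm{Fro},2}(\nu*\mu_1,\nu*\mu_2)^2\le \mathbb{E}_{(X,Y)}\,W_{\mathrm{Fro},2}(\nu*\delta_X,\nu*\delta_Y)^2,
\end{equation}
so it is enough to establish $W_{\mathrm{Fro},2}(\nu*\delta_X,\nu*\delta_Y)\le \eta\,||X-Y||_F$ for arbitrary $X,Y\in SU(d)$, after which one optimizes the coupling of $(\mu_1,\mu_2)$ on the right-hand side.

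\emph{From the infinitesimal to the global bound.} Fix $X,Y$, join them by a minimizing geodesic of the bi-invariant Riemannian metric, and subdivide it into consecutive points of mesh below $\varepsilon$. Iterating the triangle inequality for $W_{\mathrm{Fro},2}$ over the pieces and applying the hypothesis to each pair of consecutive (hence $\varepsilon$-close) points shows that $W_{\mathrm{Fro},2}(\nu*\delta_X,\nu*\delta_Y)$ is at most $\eta+o(1)$ times the length of the inscribed polygon, which in the mesh-zero limit equals the geodesic distance $\met_{\mathrm{Rie}}(X,Y)$. This already gives the convergence in $W_{\mathrm{Rie},2}$ used in Ref.~\cite{brandao_local_2016}; recovering the sharper $||X-Y||_F$ on the right---instead of the larger $\met_{\mathrm{Rie}}(X,Y)$---is exactly what the more careful coupling construction of Ref.~\cite{oliveira2009convergence} achieves.

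The only genuine obstacle is this last point: $(SU(d),||\cdot-\cdot||_F)$ is not a length space, so a naive chaining argument incurs the bounded but strictly positive gap between the Frobenius chord $||X-Y||_F$ and the arclength $\met_{\mathrm{Rie}}(X,Y)$ of the connecting geodesic. Since Theorem~2 of Ref.~\cite{oliveira2009convergence} is built to absorb precisely this loss, the clean route is to cite it directly, the two reductions above serving only to record why it applies; the remaining ingredients---compactness, measurability, the gluing lemma, and the triangle inequality for $W_{\mathrm{Fro},2}$---are routine.
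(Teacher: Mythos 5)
Your proposal takes essentially the same route as the paper: the paper offers no proof of this lemma at all, presenting it purely as a special case of Theorem~2 of Ref.~\cite{oliveira2009convergence}, which is exactly the citation you rest on. Your additional sketch of the mechanism (gluing-lemma reduction to Dirac masses, then boosting the infinitesimal contraction) and your caveat that the na\"ive chaining argument only yields the geodesic rather than the Frobenius chord distance are both accurate and, if anything, more informative than the paper's treatment.
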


Using Lemma~\ref{lemma:BHHsummaryofpathcoupling}, we prove the following convergence result:
\begin{lemma}\label{lemma:wassersteinbound}
	For every $k\geq 1$, we have the bound:
	\begin{equation}
		W_{\met_{\mathrm{Fro}},2}\big(\sigma^{*k},\mu_{H}\big)\leq \left(1-\frac{3}{2^{2n}-1}\right)^{k/2}\pi 2^{n/2}.
	\end{equation}
\end{lemma}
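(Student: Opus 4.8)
The plan is to combine the infinitesimal-contraction machinery of Lemma~\ref{lemma:BHHsummaryofpathcoupling} with the single-step contraction rate extracted from the unconditional gap in Lemma~\ref{lemma:auxiliarywalk}, and then iterate. First I would fix a single step of the walk, $\sigma = \mu_{\mathrm{Cl}}*\mu_{H,1}*\mu_{\mathrm{Cl}}$, and establish the infinitesimal contraction hypothesis: show that
\begin{equation}
\limsup_{X\to Y}\left\{\frac{W_{\mathrm{Fro},2}(\sigma*\delta_X,\sigma*\delta_Y)}{\|X-Y\|_{F}}:\|X-Y\|_F\leq \varepsilon\right\}\leq \eta
\end{equation}
for $\eta^2 = 1-\tfrac{3}{2^{2n}-1}$. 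The natural coupling is to drive both copies with the \emph{same} Clifford samples and the same single-qubit Haar unitary; then $\sigma*\delta_X$ and $\sigma*\delta_Y$ differ only through the conjugation of the common randomness on $X$ versus $Y$. Expanding to second order in the tangent direction $A = X - Y$ (infinitesimally), the expected squared Frobenius distance after one step is governed exactly by the action of the moment operator $M(\sigma,2)$ on the traceless part of $A\otimes\bar A$ — this is the standard dictionary between Wasserstein contraction for the Frobenius metric and the second moment operator. Concretely, $\EX\|A - C A C^\dagger\|_F^2$-type quantities (with $C$ the composite random unitary) reduce to $2\,\mathrm{vec}(A)^\dagger(\iden - M(\sigma,2))\mathrm{vec}(A)$ after the appropriate bookkeeping, which is bounded by $2(1-\lambda)\|A\|_F^2$ where $\lambda = \tfrac{3}{2}\cdot\tfrac{1}{2^{2n}-1}$ is the gap from Lemma~\ref{lemma:auxiliarywalk}. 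Taking square roots gives $\eta = (1-\lambda)^{1/2}$; I should double-check whether the clean identity gives $(1-\lambda)$ or $(1-2\lambda)^{1/2}$ inside, but the stated bound $\left(1-\tfrac{3}{2^{2n}-1}\right)^{k/2}$ indicates the per-step factor is exactly $(1-\tfrac{3}{2^{2n}-1})^{1/2}$, i.e. $\eta^2 = 1 - \tfrac{3}{2^{2n}-1}$.

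Second, I would iterate. Lemma~\ref{lemma:BHHsummaryofpathcoupling} with $\mu_1 = \sigma^{*(k-1)}$ and $\mu_2 = \mu_H$ (using right-invariance of $\mu_H$, so $\sigma*\mu_H = \mu_H$) gives $W_{\mathrm{Fro},2}(\sigma^{*k},\mu_H)\leq \eta\, W_{\mathrm{Fro},2}(\sigma^{*(k-1)},\mu_H)$, hence by induction $W_{\mathrm{Fro},2}(\sigma^{*k},\mu_H)\leq \eta^{k}\,W_{\mathrm{Fro},2}(\delta_{\iden},\mu_H) = \eta^k\, W_{\mathrm{Fro},2}(\delta_{\iden},\mu_H)$, where I have used that $\sigma^{*0} = \delta_{\iden}$ and that the walk starts deterministically at the identity. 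It remains to bound the base case $W_{\mathrm{Fro},2}(\delta_{\iden},\mu_H)$. Since any coupling of $\delta_{\iden}$ with $\mu_H$ is just $(\,\iden,\, U)$ with $U\sim\mu_H$, we have $W_{\mathrm{Fro},2}(\delta_{\iden},\mu_H)^2 = \EX_{U\sim\mu_H}\|\iden - U\|_F^2$. But $\|\iden-U\|_F^2 = 2d - 2\,\mathrm{Re}\,\tr U$, and $\EX\tr U = 0$ for $d\geq 1$, so $W_{\mathrm{Fro},2}(\delta_{\iden},\mu_H)^2 = 2d = 2^{n+1}$, i.e. $W_{\mathrm{Fro},2}(\delta_{\iden},\mu_H) = \sqrt{2}\cdot 2^{n/2} < \pi\, 2^{n/2}$. (In fact one can crudely bound $\|\iden - U\|_F \le \pi\, d^{1/2}$ pointwise via the geodesic-length/chordal-length comparison, which is the route matching the $\pi$ in the statement; either way the factor $\pi 2^{n/2}$ is comfortably valid.) Putting the pieces together yields exactly
\begin{equation}
W_{\met_{\mathrm{Fro}},2}\big(\sigma^{*k},\mu_{H}\big)\leq \left(1-\frac{3}{2^{2n}-1}\right)^{k/2}\pi 2^{n/2}.
\end{equation}

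The main obstacle I anticipate is the first step: cleanly converting the operator gap $g(\sigma,2) \le 1 - \tfrac{3}{2}\tfrac{1}{2^{2n}-1}$ into the infinitesimal Wasserstein contraction rate $\eta$, with the correct constant. The subtleties are (i) the walk lives on $U(d)$ while Lemma~\ref{lemma:BHHsummaryofpathcoupling} is phrased for $SU(d)$, so I need to either restrict to the tangent directions that are traceless (which is exactly the subspace on which the second-moment gap is defined — the global phase direction is fixed by the chosen coupling and contributes nothing) or lift to $SU(d)$ with a harmless normalization; (ii) verifying that the particular coupling (shared Cliffords, shared single-qubit Haar) is admissible for \emph{adjacent} $X,Y$ in the path-coupling sense and that no better-than-linear cancellation is lost; and (iii) tracking whether the second-order expansion produces the factor $\tfrac{3}{2}$ or whether it gets halved/doubled when passing between $\EX\,\met_{\mathrm{Fro}}(\cdot,\cdot)^2$ and the quadratic form $\mathrm{vec}(A)^\dagger(\iden - M(\sigma,2))\mathrm{vec}(A)$. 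I would resolve (iii) by carefully writing $\met_{\mathrm{Fro}}(X,Y)^2 = \|X-Y\|_F^2$, parametrizing $X = e^{\ii\epsilon A}Y$ with $A$ Hermitian traceless, and computing $\EX_{C}\|e^{\ii\epsilon A} - C^\dagger e^{\ii\epsilon A} C\|$-type expressions to order $\epsilon^2$ — the leading term is $\epsilon^2\,\EX_C\|A - C^\dagger A C\|_F^2$, and expanding the square and using $\EX_C[C^\dagger A C \otimes \overline{C^\dagger A C}] $ against the identity channel gives precisely $2\epsilon^2(1 - \lambda)\|A\|_F^2$ with $\lambda$ the smallest nonzero eigenvalue gap, matching Lemma~\ref{lemma:auxiliarywalk} once the invariant ($S_2$-symmetric) subspace is correctly projected out. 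Everything after that is the routine induction and base-case estimate sketched above.
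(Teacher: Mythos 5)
Your proposal has two genuine problems, and the first one is fatal.

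\textbf{The coupling you propose gives no contraction.} You couple the two copies by feeding both the \emph{same} Clifford samples and the \emph{same} single-qubit Haar unitary. But the walk acts by left multiplication, $X\mapsto C'(U\otimes\mathbbm{1}_{n-1})CX$, so under the identity coupling $X'-Y'=C'(U\otimes\mathbbm{1}_{n-1})C\,(X-Y)$ and, by unitary invariance of the Frobenius norm, $\|X'-Y'\|_F=\|X-Y\|_F$ exactly. The infinitesimal contraction constant of this coupling is $\eta=1$, and Lemma~\ref{lemma:BHHsummaryofpathcoupling} then yields nothing. (Your heuristic reduction to $\EX\|A-CAC^\dagger\|_F^2$ treats the walk as conjugation, which it is not.) The essential idea you are missing is that the coupling must be \emph{asymmetric}: the paper gives $X$ the single-qubit unitary $UV$ and $Y$ the unitary $U$, where $V\in U(2)$ is chosen adaptively (depending on $X,Y,C$) to minimize the post-step distance; the marginals remain correct because $U\mapsto UV$ preserves Haar measure. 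The contraction then comes from $2\bigl(\Tr[\mathbbm{1}_n]-\EX\|\Tr_{[2,n]}(CXY^\dagger C^\dagger)\|_1\bigr)$ via the variational characterization of the trace norm, a Taylor expansion of $XY^\dagger=\exp(\ii\varepsilon H)$, and the exact $2$-design property of the Clifford group applied to $\Tr\bigl[H^{\otimes 2}(C^\dagger)^{\otimes 2}(\mathbb{F}_1\otimes\mathbbm{1}_{[2,n]})C^{\otimes 2}\bigr]$. None of this is recoverable from a shared-randomness coupling.

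\textbf{The argument is circular.} You propose to extract the per-step rate $\eta$ from the spectral gap in Lemma~\ref{lemma:auxiliarywalk}. But in the paper the logical order is the reverse: Lemma~\ref{lemma:auxiliarywalk} is \emph{derived from} Lemma~\ref{lemma:wassersteinbound} via the inequality $g(\nu,t)\leq 2tW_{\mathrm{Fro},1}(\nu,\mu_H)$, Jensen, and the amplification $g(\sigma^{*k},t)=g(\sigma,t)^k$. There is no independent proof of the gap of $\sigma$ available to bootstrap from, so even if the dictionary between the second-moment gap and the Wasserstein contraction rate could be made precise (which you do not do), you would be assuming what you need to prove. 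Your base-case estimate $W_{\mathrm{Fro},2}(\delta_{\mathbbm{1}},\mu_H)=\sqrt{2}\,2^{n/2}\leq\pi 2^{n/2}$ and the iteration of Lemma~\ref{lemma:BHHsummaryofpathcoupling} are correct and match the paper, but the core of the argument --- establishing the infinitesimal contraction with the constant $1-\frac{3}{2^{2n}-1}$ --- is not there.
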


\begin{proof}
	
A single step of the random walk $\sigma$ applied to a fixed unitary $X\in U(2^{n})$ by left multiplication yields:
\begin{equation}
X\to \left\{C'(U\otimes \mathbbm{1}_{n-1})C X\right\},
\end{equation}
for $U\in U(2)$ and $C,C'\in \mathrm{Cl}(n)$.
The unitary $Y$ undergoes the same transformation.
We can now introduce a family of random couplings for $(\sigma*\delta_X,\sigma*\delta_Y)$:
\begin{equation}
X'= C'(UV\otimes \mathbbm{1}_{n-1})C X\qquad\qquad Y'=C'(U\otimes \mathbbm{1}_{n-1})C X
\end{equation}
defined for a unitary $V\in U(2)$ that is independent of $U$ but can depend on $X,Y$ and $C$.

We need to bound:
\begin{align}
\mathbb{E}||X'-Y'||_{\mathrm{F}}^2=\mathbb{E}||C'(UV\otimes \mathbbm{1}_{n-1})C X-C'(U\otimes \mathbbm{1}_{n-1})C Y||_{\mathrm{F}}^2.
\end{align}
We can choose $V$ to be minimal for all $C,X$ and $Y$.
Then, we find
\begin{align}
\begin{split}
\mathbb{E}\left(\min_{V}||C'(UV\otimes \mathbbm{1}_{n-1})C X-C'(U\otimes \mathbbm{1}_{n-1})C Y||_{\mathrm{F}}^2\right)&=2\left(\mathrm{Tr}[\mathbbm{1}_n]-\max_{V}\mathrm{Tr}[V\otimes \mathbbm{1}_{n-1} CXY^{\dagger}C^{\dagger}]\right)\\
&=2\left(\mathrm{Tr}[\mathbbm{1}_n]-\mathbb{E}||\mathrm{Tr}_{[2,n]}(CXY^{\dagger}C^{\dagger})||_1\right),
\end{split}
\end{align}
where we used the variational characterization of the Schatten $1$-norm.
We can choose $X$ and $Y$ to be infinitesimally close together and write
\begin{equation}
R:=XY^{\dagger}=\exp(\ii \varepsilon H)=\mathbbm{1}_n+\ii \varepsilon H-\frac{\varepsilon^2}{2}H^2+O(\varepsilon^3),
\end{equation}
with $||H||_{\mathrm{F}}\leq 1$.
By Taylor expanding the $1$-norm, this implies (as in Ref.~\cite{brandao_local_2016}):
\begin{multline}
\mathbb{E}\left(\min_{V}||C'(UV\otimes \mathbbm{1}_{n-1})C X-C'(U\otimes \mathbbm{1}_{n-1})C Y||_{\mathrm{F}}^2\right)\\=\varepsilon^2\left(\mathrm{Tr}(H^2)-\frac{1}{2^{n-1}}\mathbb{E}\left[\mathrm{Tr}\left(\mathrm{Tr}_{[2,n]}\left[CHC^{\dagger}\right]\right)^2\right]\right)+O(\varepsilon^3).
\end{multline}
\jh{Denote by $\mathbb{F}$ the swap operator defined by $\mathbb{F}:\mathbb{C}^d\otimes\mathbb{C}^d\to \mathbb{C}^d\otimes\mathbb{C}^d$, $\mathbb{F} |\psi\rangle\otimes |\phi\rangle=|\phi\rangle\otimes |\psi\rangle$.}
The ``replica trick'' 
\begin{equation}
\mathrm{Tr}[A^2]=\mathrm{Tr}[A^{\otimes 2}\mathbb{F}]
\end{equation}
for every matrix $A$
 can be applied to show
\begin{align}
\begin{split}
\mathbb{E}\left[\mathrm{Tr}\left(\mathrm{Tr}_{[2,n]}\left[CHC^{\dagger}\right]\right)^2\right]&= \mathrm{Tr}\left[(CHC^{\dagger})^{\otimes 2}~ \mathbb{F}_1\otimes \mathbbm{1}_{[2,n]}\right]\\
&=\mathrm{Tr}\left[H^{\otimes 2}~ (C^{\dagger})^{\otimes 2}(\mathbb{F}_1\otimes \mathbbm{1}_{[2,n]})C^{\otimes 2}\right].
\end{split}
\end{align}
We now use that the Clifford group is an exact unitary $2$-design: the expectation value in the above equation is the same as that for Haar-random unitaries $C$.
We employ standard formulas~\cite[Lem.~IV.3]{abeyesinghe2009mother} for the twirl over the Haar measure:
    \begin{align}
\begin{split}
\mathbb{E}_{C\sim\mu_{\mathrm{Cl}}}~&\mathrm{Tr}[H^{\otimes 2}~ (C^{\dagger})^{\otimes 2}(\mathbb{F}_1\otimes \mathbbm{1}_{[2,n]})C^{\otimes 2}]\\
&=\mathrm{Tr}\left[H^{\otimes 2}\left(\frac{2+2^{n-1}}{2^n+1}\frac12(\iden+\mathbb{F})-\frac{2-2^{n-1}}{2^n-1}\frac12(\iden-\mathbb{F})\right)\right]\\
&=\frac12\left(\frac{2+2^{n-1}}{2^n+1}+\frac{2-2^{n-1}}{2^n-1}\right)\mathrm{Tr}\left[H^{\otimes 2}\mathbb{F}\right]+\frac12\left(\frac{2+2^{n-1}}{2^n+1}-\frac{2-2^{n-1}}{2^n-1}\right)\mathrm{Tr}\left[H^{\otimes 2}\iden\right]\\
&=\left(\frac{2- 2^{-1}}{2^n-2^{-n}}\right)\mathrm{Tr}[H^2]+\left(\frac{2^{2n-1}-2}{2^{2n}-1}\right)\mathrm{Tr}[H]^2\\
&\geq \left(\frac{2- 2^{-1}}{2^n-2^{-n}}\right)\mathrm{Tr}[H^2].
\end{split}
\end{align}
%Notice that we chose to work over $U(2^n)$.
%If we instead restrict to $SU(2^n)$, we can assume that $\mathrm{Tr}(H)=0$, in which case the above calculation would be an equality.
Consequently, we obtain
\begin{equation}
\mathbb{E}\left(\min_{V}||C'(UV\otimes \mathbbm{1}_{n-1})C X-C'(U\otimes \mathbbm{1}_{n-1})C Y||_{\mathrm{F}}^2\right)\leq  \left(1-\frac{3}{2^{2n}-1}\right)\varepsilon^2\mathrm{Tr}(H^2)+O(\varepsilon^3).
\end{equation}
From
\begin{align}
\begin{split}
||X-Y||^2_{F}= \varepsilon^2\mathrm{Tr}(H^2)+O(\varepsilon^3),
\end{split}
\end{align}
we find
\begin{equation}
\mathbb{E}||X'-Y'||^2_{F}\leq  \left(1-\frac{3}{2^{2n}-1}\right)||X-Y||^2_{F}+O(\varepsilon^3).
\end{equation}
Notice, that for $X\neq Y$, we can always choose $\varepsilon$, such that $\mathrm{Tr}(H^2)=||H||_F^2=1$.
Therefore, we find that
\begin{align}
\begin{split}
\frac{W_{\mathrm{Fro},2}(\sigma*\delta_X,\sigma*\delta_Y)}{||X-Y||_F}&\leq \frac{(\mathbb{E}||X'-Y'||_F^2)^{\frac12}}{||X-Y||_F}\\
&\leq \frac{\sqrt{\left(1-\frac{3}{2^{2n}-1}\right)||X-Y||^2_{F}+O(\varepsilon^3})}{||X-Y||_F}\\
&\leq \sqrt{\left(1-\frac{3}{2^{2n}-1}\right)}+\sqrt{\frac{O(\varepsilon^3)}{\varepsilon^2+O(\varepsilon^3)}}\\
&\leq \sqrt{\left(1-\frac{3}{2^{2n}-1}\right)}+O(\sqrt{\varepsilon}),
\end{split}
\end{align}
with $\varepsilon\to 0$ as $X\to Y$.
Invoking Lemma~\ref{lemma:BHHsummaryofpathcoupling} $k$ times with $\eta:= \sqrt{(1-\frac{3}{2^{2n}-1})}$ implies 
\begin{align}
\begin{split}
W_{\mathrm{Fro},2}(\sigma^{*k},\mu_H)&=W_{\mathrm{Fro},2}(\sigma^{*k}*\delta_{\mathbbm{1}},\sigma*\mu_H)\\
&\leq \eta^{k}W_{\mathrm{Fro},2}(\delta_{\mathbbm{1}},\mu_H)\\
&\leq \eta^k (\mathbb{E}_U||\mathbbm{1}-U||_F^2)^{\frac12}\\
&=\sqrt{2}2^{n/2}\eta^k,
\end{split}
\end{align}
which is the statement of Lemma~\ref{lemma:wassersteinbound}.
\end{proof}
We can now prove Lemma~\ref{lemma:auxiliarywalk}:
\begin{proof}[Proof of Lemma~\ref{lemma:auxiliarywalk}]
	We use the following formula~\cite{brandao_local_2016}, which holds for any probability distribution $\nu$:
	\begin{equation}
	g(\nu,t)\leq 2 t W_{\mathrm{Fro},1}(\nu,\mu_{H})
	\end{equation}
	Combining this with Lemma~\ref{lemma:wassersteinbound} yields:
	\begin{align}
	\begin{split}\label{eq:boundonconvolutedsigma}
	g(\sigma^{*k},t)&\leq 2t W_{\mathrm{Fro},1}(\sigma^{*k},\mu_{H})\\
	&\leq 2 t W_{\mathrm{Fro},2}\big(\sigma^{*k},\mu_{H}\big)\\
	&\leq \left(1-\frac{3}{2^{2n}-1}\right)^{k/2}2\sqrt{2} t 2^{n/2},
	\end{split}
	\end{align}
	where the second inequality follows immediately from Jensen's inequality.
 $\sigma$ is a symmetric measure, that is, it is invariant under taking inverses.
	Consequently, the moment operators are hermitian, which implies
	\begin{equation}\label{eq:gapamplification}
g(\sigma^{*k},t)=	g(\sigma,t)^k.
	\end{equation}
	Notice that the inequality $g(\sigma^{*k},t)\leq g(\sigma,t)^k$ holds for all measure but not equality.
	We plug~\eqref{eq:gapamplification} into~\eqref{eq:boundonconvolutedsigma}, take the $k$-th square root and the limit $k\to \infty$.
	This yields
	\begin{equation}
	g(\sigma,t)\leq \left(1-\frac{3}{2^{2n}-1}\right)^{1/2}\leq 1-\frac12 \frac{3}{2^{2n}-1}.
	\end{equation}
\end{proof}

Before we can prove Theorem~\ref{thm:unconditionalgap}, we need another auxiliary bound.
More precisely, we show that a local random walk over Clifford generators quickly approximates the uniform measure on the Clifford group:
 Apply first a unitary $U_{1,2}\otimes U_{3,4}\otimes...$ and then a unitary $U_{2,3}\otimes U_{4,5}\otimes...$, where all $U_{i,i+1}$ are drawn uniformly from the Clifford group on $2$ qubits. 
We denote this distribution by $\nu^{\rm Cl, bw}_n$ 

The Clifford group is a finite group with polynomially bounded circuit complexity~\cite{aaronson2004improved,bravyi2021hadamard} and we can use this fact to apply the comparison technique~\cite{diaconis1993comparison}.
This gives the following bound:
\begin{lemma}\label{lemma:comparisonmethod}
	The following bound holds for all $t\geq 1$:
	\begin{equation}\label{eq:comparisonbound}
	\left|\left|M\left(\nu^{\rm Cl, bw}_n,t\right)-M(\mu_{\mathrm{Cl}},t)\right|\right|_{\infty}\leq 1-\frac{1}{2000 n^3}.
	\end{equation}
\end{lemma}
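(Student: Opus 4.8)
The plan is to read $\nu^{\rm Cl, bw}_n$, together with the auxiliary \emph{local} Clifford walk $\nu^{\rm Cl, loc}_n$ (pick a uniformly random nearest-neighbour pair $(i,i+1)$ and apply a uniform element of $\mathrm{Cl}(2)$ to it), as random walks on the \emph{finite} group $\mathrm{Cl}(n)$, whose stationary distribution is $\mu_{\mathrm{Cl}}$, and to lower bound their spectral gap via the comparison technique of Ref.~\cite{diaconis1993comparison}. Write $Q_i := \iden - M_{(i,i+1)}(\mu_{\mathrm{Cl}(2)},t)$ for the projector onto the subspace not invariant under $\mathrm{Cl}(2)$ acting on pair $(i,i+1)$ inside the $t$-th moment representation, and $H^{\mathrm{Cl}}_{n,t} := \sum_i Q_i$. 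Since pairs inside a brickwork sublayer are disjoint the corresponding $Q_i$ commute, so $M(\nu^{\rm Cl, bw}_n,t)$ has the form $\prod_i(\iden - Q_i)$ required by the detectability lemma, while $M(\nu^{\rm Cl, loc}_n,t) = \iden - \tfrac{1}{n-1}H^{\mathrm{Cl}}_{n,t}$; and because nearest-neighbour $2$-qubit Cliffords generate $\mathrm{Cl}(n)$, the common ground space of $H^{\mathrm{Cl}}_{n,t}$ is exactly $\mathrm{range}(M(\mu_{\mathrm{Cl}},t))$, so $H^{\mathrm{Cl}}_{n,t}$ is frustration-free. Then, exactly as in Eq.~\eqref{eq:gapreduction} (with an open boundary) one gets $\|M(\nu^{\rm Cl, loc}_n,t) - M(\mu_{\mathrm{Cl}},t)\|_\infty = 1 - \Delta(H^{\mathrm{Cl}}_{n,t})/(n-1)$, and — each $Q_i$ commuting with all but two of the others — Lemma~\ref{lemma:detectabilityandconverse} yields $\|M(\nu^{\rm Cl, bw}_n,t) - M(\mu_{\mathrm{Cl}},t)\|_\infty \le (1 + \Delta(H^{\mathrm{Cl}}_{n,t})/4)^{-1/2}$. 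Everything thus reduces to lower bounding $\Delta(H^{\mathrm{Cl}}_{n,t})$, equivalently the spectral gap of the $\nu^{\rm Cl, loc}_n$-walk.

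The key point is that $M(\nu^{\rm Cl, loc}_n,t)$ is positive semidefinite and block-diagonalises over the irreducible representations of $\mathrm{Cl}(n)$ occurring in $\rho_t := U^{\otimes t}\otimes\overline{U}^{\otimes t}$; since \emph{every} non-trivial irrep of a finite group occurs in its left-regular representation, the second-largest eigenvalue of $M(\nu^{\rm Cl, loc}_n,t)$ is at most the second-largest eigenvalue of the transition operator of the $\nu^{\rm Cl, loc}_n$-walk on $\ell^2(\mathrm{Cl}(n))$, i.e.\ at most $1$ minus the group-walk spectral gap — a bound that does not depend on $t$ at all, which is why the lemma is $t$-independent. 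To lower bound this group-walk gap I would apply the comparison theorem of Diaconis--Saloff-Coste against the one-step-uniform walk on $\mathrm{Cl}(n)$, whose gap is $1$. The required input is a choice, for each $g\in\mathrm{Cl}(n)$, of a word of length at most $L$ in the generators of the hard walk — a circuit for $g$ out of $2$-qubit Cliffords on nearest-neighbour pairs — and here one uses the polynomial bound $L = O(n^2/\log n)$ on the $2$-qubit-Clifford gate count of $n$-qubit Cliffords~\cite{aaronson2004improved,bravyi2021hadamard}; this is exactly the ``unreasonable effectiveness'' of the Clifford group that makes the argument go through. Crucially the comparison is carried out at the level of the $O(n)$ \emph{move classes} ``uniform $\mathrm{Cl}(2)$ on pair $i$'', each of probability $\tfrac1{n-1}$, rather than at the level of individual group elements (single gates, or whole brickwork layers, have exponentially small weight, which would ruin the estimate). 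Bounding the resulting comparison constant $A$ by $O(L\cdot L) = \mathrm{poly}(n)$ gives $\mathrm{gap}(\nu^{\rm Cl, loc}_n) \gtrsim n^{-4}$, hence $\Delta(H^{\mathrm{Cl}}_{n,t}) \gtrsim n^{-3}$, and plugging into the detectability bound gives $\|M(\nu^{\rm Cl, bw}_n,t) - M(\mu_{\mathrm{Cl}},t)\|_\infty \le 1 - \Omega(n^{-3})$; carrying the explicit constant from~\cite{aaronson2004improved,bravyi2021hadamard} through delivers $1 - 1/(2000 n^3)$.

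The step I expect to be the real obstacle is the congestion estimate bounding $A$: the naive canonical-path bound divides by the exponentially small probability of a single generator and is useless, so what is actually needed is a \emph{load-balancing} property of the chosen words — for every step index $j\le L$ and move class $i$, the probability that the $j$-th gate of the word for a Haar-random $g\in\mathrm{Cl}(n)$ acts on pair $i$ must be within a constant factor of $\tfrac1{n-1}$ (and similarly the $\mathrm{Cl}(2)$ element applied must be near-uniform). I expect to get this by reading the word off an explicit canonical form for $\mathrm{Cl}(n)$ — a symplectic/Bruhat-type decomposition (Pauli layer, CNOT/CZ layers, Hadamard-type layer) — while randomizing the order in which rows and columns are eliminated, and then checking that the induced per-step distribution over the $O(n)$ move classes is sufficiently spread out; making this quantitative is where the work lies. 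The remaining ingredients — the frustration-free reduction, the regular-representation argument giving $t$-independence, and the detectability-lemma bridge from the local to the brickwork walk (mirroring the $\nu_n \leftrightarrow \nu^{\rm bw}_n$ passage in the excerpt) — are routine given the tools already assembled.
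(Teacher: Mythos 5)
Your skeleton matches the paper's: lift the problem to the regular representation of the finite group $\mathrm{Cl}(n)$ via Peter--Weyl (which is exactly how the paper gets $t$-independence), and lower bound the group-walk gap by the Diaconis--Saloff-Coste comparison against the uniform walk, using the polynomial Clifford circuit-size bounds of Aaronson--Gottesman and Bravyi--Maslov. The genuine problem with your write-up is that the step you flag as ``the real obstacle'' --- the congestion estimate, which you propose to handle with a load-balanced, randomized canonical form and leave unproven --- is not an obstacle at all, and your reason for thinking it is one is mistaken. For the \emph{local} Clifford walk the generators are pairs (site $i$, element $g\in\mathrm{Cl}(2)$), each of probability $\tfrac{1}{(n-1)|\mathrm{Cl}(2)|}$, which is only \emph{polynomially} small; it is whole brickwork layers that have exponentially small weight, not single gates. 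Consequently the completely naive worst-case congestion bound, in the form $1-\lambda_2 \geq \eta/d^2$ with $\eta$ the probability of the least probable generator and $d$ the word-length diameter, already yields an inverse-polynomial gap with no load-balancing whatsoever. This crude corollary of \cite{diaconis1993comparison} is precisely what the paper invokes: it takes $d=9n$ from the Bravyi--Maslov canonical form and $\eta = \Theta(1/n)$ up to the constant $|\mathrm{Cl}(2)|$, giving the stated $1-\Omega(n^{-3})$ in two lines. Your elaborate Bruhat-decomposition randomization is machinery you do not need.

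Two smaller points. First, your quantitative accounting does not close as written: with gate count $L=O(n^2/\log n)$ the crude bound gives a group-walk gap of order $n^{-5+o(1)}$, not the $n^{-4}$ you assert (your claim $A=O(L^2)$ silently assumes the load-balancing you did not prove); to recover the exponent $3$ one should count against the depth-$9n$ form as the paper does. Since $n=O(\log t)$ wherever this lemma is used, this discrepancy only costs polylogarithmic factors in $t$ downstream, but it means the constant $1/(2000n^3)$ is not reached by your route. Second, your explicit bridge from the local to the brickwork Clifford walk via the frustration-free Hamiltonian $H^{\mathrm{Cl}}_{n,t}$ and the detectability lemma is sound and is actually spelled out more carefully than in the paper's own proof, which passes from the averaging-operator bound to the brickwork statement rather tersely; that part of your proposal is a legitimate, slightly more pedantic variant.
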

\begin{proof}
	Consider the averaging operator on the group algebra $T_{\nu}: L^2(\mathrm{Cl}_n)\to L^2(\mathrm{Cl}_n)$ defined by 
	\begin{equation}
	(T_{\nu}f)(g):=\int f(h^{-1}g)\mathrm{d} \nu(h).
	\end{equation}
	By~\cite{diaconis1993comparison}, we have
	\begin{equation}
	\lambda_2(T_{\sigma})\leq 1-\frac{\eta}{d^2},
	\end{equation}
	where $\eta$ is the probability of the least probable generator and $d$ is the minimal number of generators necessary to generate any group element.
	Here, $d=9n$ by~\cite{bravyi2021hadamard} and $\eta=1/|\mathrm{Cl}_2|n=1/24n$.
	By the Peter-Weyl theorem, the entire spectrum of the moment operators is contained in the spectrum of the averaging operator and thus we obtain Eq.~\eqref{eq:comparisonbound}.
\end{proof}
We remark that the comparison technique was previously applied to the Clifford group in Ref.~\cite{divinzeno_data_2001}.

We can now put everything together.
\begin{proof}[Proof of Theorem~\ref{thm:unconditionalgap}]
	For simplicity, we use the following notation:
	\begin{equation}
	P_{\mathrm{Cl},n}=M(\mu_{\mathrm{Cl}_n},t).
    \end{equation}
	We use 
	\begin{equation}
	\left|\left|M\left(\nu^{\rm Cl, bw}_n,t\right)^k-P_{\mathrm{Cl},n}\right|\right|_{\infty}\leq \left|\left|M\left(\nu^{\rm Cl, bw}_n,t\right)-P_{\mathrm{Cl},n}\right|\right|^k_{\infty}\leq \left(1-\frac{1}{2000 n^3}\right)^k.
	\end{equation}
	In particular,  we have the approximation
	\begin{equation}
		\left|\left|M\left(\nu^{\rm Cl, bw}_n,t\right)^{k_{n}}-P_{\mathrm{Cl},n}\right|\right|_{\infty}\leq \frac12 \frac{1}{2^{2n}-1}
	\end{equation}
	for a number of layers $k_n$ that we can choose as
	\begin{equation}\label{eq:k_n}
	k_n= 6000n^4.
	\end{equation}
	Therefore, by applying the triangle inequality twice, we obtain
	\begin{align}
	\begin{split}\label{eq:auxiliarygapbound}
	||M\left(\nu^{\rm Cl, bw}_n,t\right)^{k_{n}}M(\mu_{H,1}&,t)M\left(\nu^{\rm Cl, bw}_n,t\right)^{k_{n}}-P_{H}||_{\infty}\\
	&\leq ||P_{\mathrm{Cl},n}M(\mu_{H,1},t)P_{\mathrm{Cl},n}-P_{H}||_{\infty}+\frac{1}{2^{2n}-1}\\
&= ||M(\sigma,t)-P_{H}||_{\infty}+\frac{1}{2^{2n}-1}\\
&\leq 1-\frac12 \frac{3}{2^{2n}-1}+ \frac{1}{2^{2n}-1}\\
&\leq 1-\frac{1}{2(2^{2n}-1)},
\end{split}
	\end{align}
	where we used Lemma~\ref{lemma:auxiliarywalk} in the third inequality.

 $M\left(\nu^{\rm Cl, bw}_n,t\right)^{k_{n}}M(\mu_{H,1},t)M\left(\nu^{\rm Cl, bw}_n,t\right)^{k_{n}}$ is a product of $(2nk_n+1)$ orthogonal projectors.
 In the following we denote these projectors as $\iden-Q_{1},\ldots, \iden-Q_{2nk_n+1}$.
E.g. we set 
\begin{equation}
Q_1:=\iden-P_{\mathrm{Cl},2}\otimes \iden_{n-2}
\end{equation}
and $Q_{n k_n+1}:=\iden-P_{H,1}$.
The order of these \jh{labels} will not matter in the following.
As in Lemma~\ref{lemma:detectabilityandconverse}, we define a Hamiltonian $\tilde{H}:=\sum_{i=1}^{2 n k_n+1} Q_i$.
We can now relate the bound~\eqref{eq:auxiliarygapbound} to the gap of the Hamiltonian $\tilde{H}$ via the quantum union bound or converse detectability lemma (Lemma~\ref{lemma:detectabilityandconverse}):
 \begin{align}
 \begin{split}
||M\left(\nu^{\rm Cl, bw}_n,t\right)^{k_{n}}M(\mu_{H,1}&,t)M\left(\nu^{\rm Cl, bw}_n,t\right)^{k_{n}}-P_{H}||_{\infty}\\
&=\max_{|\psi^{\perp}\rangle}||M\left(\nu^{\rm Cl, bw}_n,t\right)^{k_{n}}M(\mu_{H,1},t)M\left(\nu^{\rm Cl, bw}_n,t\right)^{k_{n}}|\psi^{\perp}\rangle||_2\\
&=\max_{|\psi^{\perp}\rangle}\left|\left|\prod_{i=1}^{2 n k_n+1} (\iden -Q_i)|\psi^{\perp}\rangle\right|\right|_2\\
&\geq \sqrt{1-4\Delta(\tilde{H})}\\
&\geq 1-4\Delta(\tilde{H}).
\end{split}
 \end{align}
 Combining this with~\eqref{eq:auxiliarygapbound}, yields:
 \begin{equation}
  \frac{1}{8} \frac{1}{2^{2n}-1}\leq \Delta(\tilde{H}).
 \end{equation}
 We still need to relate the gap of $\tilde{H}$ to that of $H_{n,t}$.
 Using the operator inequalities $P_{\mathrm{Cl}}\geq P_H$ and $P_{H,1}\geq P_{H,2}$, we obtain
 \begin{align}
 \begin{split}
 \tilde{H}&= 2k_n\sum_{i=1}^n(\iden- \iden_{i-1}\otimes P_{\mathrm{Cl},2}\otimes \iden_{n-i-1}) + (\iden-P_{H,1}\otimes \iden_{n-1})\\
 &\leq  2k_n\sum_{i=1}^n(\iden- \iden_{i-1}\otimes P_{H,2}\otimes \iden_{n-i-1}) + (\iden-P_{H,1}\otimes \iden_{n-1})\\
 &\leq (2k_n+1)\sum_{i=1}^n(\iden- \iden_{i-1}\otimes P_{H,2}\otimes \iden_{n-i-1})\\
 &=(2k_n+1)H_{n,t}.
 \end{split}
 \end{align}
 \jh{Moreover}, notice that $\tilde{H}$ and $H_{n,t}$ have the same ground state space with ground state energy $0$ as the gate set consisting of all $2$-local Clifford unitaries plus single qubit unitaries are universal~\cite{nebe_clifford_2001}.
 Indeed, it was proven in Ref.~\cite{harrow2009random} that the eigenvalue $1$ eigenspace of a moment operator for a probability distribution with universal support equals the image of the Haar random moment operator.
 Consequently, we find
 \begin{equation}
\Delta(H_{n,t})\geq \frac{1}{8(2k_n+1)(2^{2n}-1)}\geq 120000^{-1}n^{-4}2^{-2n}.
 \end{equation}
 Together with Eq.~\eqref{eq:gapreduction}, this implies Theorem~\ref{thm:unconditionalgap}.
\end{proof}

\section{Outlook: Gaps from representations of the Clifford group?}
In this work, we improved the scaling of the design depth for local random quantum circuits over qubits.
The lion's share of the improvement comes from the near-optimal convergence of an auxiliary random walk that interleaves global random Clifford unitaries and single qubit Haar random unitaries. 

The key open question, posed in Ref.~\cite{brandao_local_2016}, is to either prove or disprove a subexponential unconditional gap.
An unconditional gap that scales inverse polynomially in $n$ would imply a robust version of the Brown-Susskind conjecture for random quantum circuits. 
For approaches based on techniques from harmonic analysis, see Ref.~\cite{haferkamp2021improved}.
The techniques presented in this paper open up another potential path towards this goal: It suffices to prove such a gap for the auxiliary walk defined in Eq.~\eqref{eq:auxiliarywalk}.

The commutant of tensor powers of the Clifford group was studied in Ref.~\cite{gross2021schur,montealegre_howe_2019,montealegre2022duality} and it was proven in Ref.~\cite{haferkamp2020quantum} that the action of a single qubit Haar average has a strong shrinking effect on a natural basis of the Clifford commutant labeled by Lagrangian subspaces of $\mathbb{Z}_2^{2t}$.
Therefore, Lemma 13 in Ref.~\cite{haferkamp2020quantum} can be viewed as a sanity check for the constant spectral gap conjecture.

To prove the generation of state designs in linear depth, we do not require a spectral gap of the entire moment operator.
Indeed, it would suffice to show a spectral gap of the moment operator restricted to the endomorphisms of the symmetric subspace $\mathrm{End}(S^t[(\mathbb{C}^2)^{\otimes n}])$:
\begin{equation}
\left|\left|(M(\nu_n,t)-P_H)\large|_{\mathrm{vec}\left(\mathrm{End}(S^t[(\mathbb{C}^2)^{\otimes n}])\right)}\right|\right|_{\infty}\leq 1-\mathrm{poly}^{-1}(n)?
\end{equation}
To characterize the commutant of powers of the Clifford group on the endomorphisms of the symmetric subspace, one would need to understand how the symmetric subspace decomposes into irreducible representations of the Clifford group.

\section{Acknowledgements}
I am grateful to Nick Hunter-Jones, Richard Kueng and Felipe Montealegre-Mora for fruitful discussions.
Moreover, I want to thank Jens Eisert for detailed comments on the manuscript.
This work was funded by the foundational questions institute (FQXi).

\bibliographystyle{plainurl}

\appendix
\section{Proof of Observation~\ref{observation:improvedreduction}}\label{app:improvedreduction}
In this appendix we show how Observation~\ref{observation:improvedreduction} follows from bounds on Lambert's W function.
The improvement obtained in Observation~\ref{observation:improvedreduction} makes the difference between a design depth of $O(nt^{6})$ and a design depth of $O(nt^{5+o(1)})$.
\begin{proof}
	Eq.~\eqref{eq:BHHreductioninequality} is implied by 
	\begin{equation}\label{eq:logofreduction}
	l -\frac{1}{2\ln(2)}\ln(l)\geq 2\log_2(4t).
	\end{equation}
	First, we consider the equation $y-a\ln(y)=b$.
	This is equivalent to
	\begin{equation}\label{eq:manipulatedBHHcondition}
	e^{-\frac{y}{a}}\left(-\frac{y}{a}\right)=-\frac{e^{-\frac{b}{a}}}{a}.
	\end{equation}
If the right hand side of Eq.~\eqref{eq:manipulatedBHHcondition} is larger than $-1/e$, this can be solved by
	\begin{equation}
	y=-aW_{-1}\left(-\frac{e^{-b/a}}{a}\right),
	\end{equation}
	where $W_{-1}$ denotes one of the branches~\cite{chatzigeorgiou2013bounds} of Lambert's W function (also called log product function).
	By definition, all branches of Lambert's W function solve the equation
	\begin{equation}
	y e^y=x
	\end{equation}
	for $y$.
	
	We can now use the following bound for $x>0$ proven in Ref.~\cite{chatzigeorgiou2013bounds}:
	\begin{equation}
	W_{-1}(e^{-x-1})\geq -1-(2x)^{\frac12}-x.
	\end{equation}
	With $x=b/a+\ln(a)-1$, $b=2\log_2(4t)$ and $a=(2\ln(2))^{-1}$, this implies
	\begin{align}
	\begin{split}
	y&\leq -a \left(-1-\sqrt{2\left(\frac{b}{a}+\ln(a)-1\right)}-\frac{b}{a}-\ln(a)+1\right)\\
	&\leq 2\log_2(4t)+1.5\sqrt{\log_2(4t)}.
	\end{split}
	\end{align}
The LHS of Eq.~\eqref{eq:logofreduction} is monotone in $l$.
Hence, choosing
	\begin{equation}
	l=\left\lceil 2\log_2(4t)+1.5\sqrt{\log_2(4t)}\right\rceil
	\end{equation}
	implies Eq.~\eqref{eq:logofreduction} and thus, by Nachtergaele's bound, Eq.~\eqref{eq:improvedreduction}.
\end{proof}
%\section{Spectral gaps for frustration-free Hamiltonians from the overlap property}
%In this note, we prove a variant of the "overlap methods" that include Nachtergaele's martingale method~\cite{nachtergaele1996spectral} and the bounds in~\cite{fannes1992finitely}.
%We derive it from applying the finite size criteria due to Knabe~\cite{knabe1988energy} to a coare-grained Hamiltonian~\cite{anshu2020higherD,fannes1992finitely}.
%Consider a Hamiltonian $H=\sum_{i=1}^n h_{i,i+1}$ on the Hilbert space $(\mathbb{C}^d)^{\otimes n}$, where we identify spin $1$ with $N+1$.
%We call $H$ frustration-free if all local terms are positive semidefinite and there is a state with ground state energy $0$. 
%Denote by $G_{[a,b]}$ the projector onto the ground space of the restricted Hamiltonian $H_{[a,b]}=\sum_{i=a}^{b-1}P_{i,i+1}$.
%We assume that $\Delta(H_{[a,b]})\geq \gamma_{b-a}$ for all $a,b$.
%\begin{theorem}
%	Assume there are numbers $l>0$ and $r,r'>0$ such that 
%	\begin{equation}
%	\varepsilon_{l,r,r'}:=\max_{m}\left|\left| G_{[m,m+r]}G_{[m+r-l,m+r+r'-l]}-G_{[m,m+r+r'-l]}\right|\right|_{\infty}<\frac12.
%	\end{equation}
%	Then, we have the following lower bound on the spectral gap:
%	\begin{equation}
%	\Delta(H)\geq \left(\frac12-\varepsilon_{l,r,r'}\right)\gamma_l.
%	\end{equation}
%\end{theorem}
%
%To prove the above result, consider the followin shifted version of the coarse-grained Hamiltonian used in Ref.~\cite{anshu2020higherD}:
%\begin{equation}
%\tilde{H}_{n,l,r}:=\sum_{i} (\one - G_{[il,(i+1)l]})+(\one - G_{[il+r,(i+1)l+r]})
%\end{equation}
%with periodic boundaries

\end{document}